\documentclass[prd,aps,eqsecnum,superscriptaddress,nofootinbib,notitlepage,longbibliography]{revtex4-2}

\usepackage[a4paper, left=2.2cm, right=2.2cm, top=2.5cm, bottom=2.5cm]{geometry}
\usepackage{setspace}
\usepackage[english]{babel}
\addto\extrasenglish{%
}
\usepackage[utf8]{inputenc}
\usepackage[T1]{fontenc}
\usepackage{lmodern}
\usepackage{amsmath,amsthm,mathdots,amssymb,bm,bbm,mathrsfs,mathtools}
\usepackage[final]{graphicx}
\usepackage{subcaption}
\usepackage{tikz,pgfplots}\pgfplotsset{compat=1.16}

\usepackage{verbatim}

\usepackage{thm-restate}
\newtheorem{thm}{Theorem}[section]
\newtheorem{cor}{Corollary}[section]
\newtheorem{lem}{Lemma}[section]

\newtheorem{rmk}{Remark}[section]

\newtheorem{defn}{Definition}[section]

\usepackage{comment}

\usepackage[colorlinks=true, urlcolor=violet, linkcolor=blue, citecolor=red, hyperindex=true, linktocpage=true, pagebackref=true, draft=false, hypertexnames=false]{hyperref}
\usepackage{mleftright}\mleftright

\usepackage{enumitem}
\mathtoolsset{showonlyrefs=true}
\usepackage{microtype,color,graphicx}
\usepackage{tikz-cd}
\usetikzlibrary{shapes.callouts, arrows.meta, positioning, calc}

\usepackage{xcolor}
\usepackage{multirow}
\usepackage{tablefootnote}

\newcommand{\CE}{\mathcal{E}}
\newcommand{\BE}{\mathbb{E}}

\newcommand{\CK}{\mathcal{K}}
\newcommand{\CL}{\mathcal{L}}
\newcommand{\CM}{\mathcal{M}}
\newcommand{\CN}{\mathcal{N}}

\newcommand{\CR}{\mathcal{R}}

\newcommand{\CT}{\mathcal{T}}

\newcommand{\vA}{\bm{A}}

\newcommand{\vB}{\bm{B}}

\newcommand{\vC}{\bm{C}}

\newcommand{\vH}{\bm{H}}
\newcommand{\vI}{\bm{I}}

\newcommand{\vK}{\bm{K}}

\newcommand{\vO}{\bm{O}}
\newcommand{\vP}{\bm{P}}

\newcommand{\vQ}{\bm{Q}}

\newcommand{\vV}{\bm{V}}

\newcommand{\vX}{\bm{X}}
\newcommand{\vY }{\bm{Y }}
\newcommand{\vZ}{\bm{Z}}

\newcommand{\vsigma}{\bm{ \sigma}}

\newcommand{\vrho}{\bm{ \rho}}
\renewcommand{\L}{\left}
\newcommand{\R}{\right}

\newcommand{\dagg}{\dagger}

\newcommand{\vertiii}[1]{{\left\vert\kern-0.25ex\left\vert\kern-0.25ex\left\vert #1 \right\vert\kern-0.25ex\right\vert\kern-0.25ex\right\vert}}
\newcommand{\norm}[1]{\Vert {#1} \Vert}

\newcommand{\normp}[2]{\norm{#1}_{#2}}
\newcommand{\lnormp}[2]{\lnorm{#1}_{#2}}

\newcommand{\labs}[1]{\left\vert {#1} \right\vert}
\newcommand{\lnorm}[1]{\left\Vert {#1} \right\Vert}
\newcommand{\e}{\mathrm{e}}

\newcommand{\ri}{\mathrm{i}}
\newcommand{\rd}{\mathrm{d}}
\newcommand*{\tr}{\mathrm{Tr}}

\newcommand{\indicator}{\mathbbm{1}}

\newcommand{\undersetbrace}[2]{ \underset{#1}{\underbrace{#2}}}

\DeclarePairedDelimiterX{\braket}[1]{\langle}{\rangle}{#1}
\DeclarePairedDelimiterX\ketbra[2]{| }{|}{#1 \delimsize\rangle\!\delimsize\langle #2}
\DeclarePairedDelimiterX\dotp[2]{\langle}{\rangle}{#1, #2}

\DeclareMathAlphabet{\dutchcal}{U}{dutchcal}{m}{n}
\SetMathAlphabet{\dutchcal}{bold}{U}{dutchcal}{b}{n}
\DeclareMathAlphabet{\dutchbcal} {U}{dutchcal}{b}{n}

\makeatletter
\DeclareRobustCommand*{\pmzerodot}{%
	\nfss@text{%
		\sbox0{$\vcenter{}$}
		\sbox2{0}%
		\sbox4{0\/}%
		\ooalign{%
			0\cr
			\hidewidth
			\kern\dimexpr\wd4-\wd2\relax
			\raise\dimexpr(\ht2-\dp2)/2-\ht0\relax\hbox{%
				\if b\expandafter\@car\f@series\@nil\relax
				\mathversion{bold}%
				\fi
				$\cdot\m@th$%
			}%
			\hidewidth
			\cr
			\vphantom{0}
		}%
	}%
}

\usetikzlibrary{quantikz2}
\makeatletter
\def\l@subsubsection#1#2{}
\makeatother
\begin{document}

\title{Note on Strong Quantum Markov Properties}

 \author{Chi-Fang Chen}
 \email{achifchen@gmail.com}
\affiliation{University of California, Berkeley, CA, USA}
\affiliation{Massachusetts Institute of Technology, Cambridge, MA, USA}

\begin{abstract}

Quantum many-body Gibbs states satisfy an approximate local Markov property~\cite{chen2025GibbsMarkov}: local noise can be approximately recovered by a quasi-local recovery map, and the conditional mutual information decays for the corresponding tripartition. Recent work~\cite{bergamaschi2025structural} extends this property to approximate stationary states (metastable states) of certain master equations modeling system--bath dynamics, and proposes a strengthened post-selected recovery property—requiring recovery to hold for each measurement outcome—motivated in part by potential applications to quantum simulation. In this note, we characterize this \textit{strong Markov property}: it holds if and only if the state additionally satisfies correlation decay for suitable pairs of observables.

We further prove several structural and operational consequences of the strong Markov property in the presence of an underlying master equation. First, one can estimate multiple observables from a \textit{single copy} of the state via a repeated measurement--recovery protocol. Second, any two strongly Markov states must have local marginals that are either very close or well separated. Third, if a strongly Markov state can be expressed as a mixture of two strongly Markov states, then their local marginals must be nearly indistinguishable.

\end{abstract}
\maketitle

%\tableofcontents

\section{Introduction}
The Markov property is a fundamental structural feature of classical Gibbs distributions: any subset of variables, conditioned on a suitable boundary, is independent of the rest of the system. Many analytic and algorithmic results, explicitly or implicitly, rely on such conditional independence.

For quantum Gibbs states, however, an exact analog fails~\cite{kuwahara2024clustering, bakshi2025dobrushin, kato2025clustering}. Fortunately, an approximate local Markov property still holds generally for quantum Gibbs states~\cite{chen2025GibbsMarkov} at arbitrary temperatures, in the following sense: any noise $\CN_A$ on local region $A$ can be approximately undone by a quasi-local recovery map $\CR$,
\begin{align}
\vrho \approx \CR\circ \CN_A[\vrho] \quad \text{for the Gibbs state}\quad \vrho = \frac{e^{-\beta \vH}}{\tr[e^{-\beta \vH}]}.
\end{align}
This gives an operational sense in which the quantum correlations between region $A$ and the rest of the system are mediated by a quasi-local neighborhood of $A$, or equivalently by the decay of \textit{quantum conditional mutual information}~\cite{fawzi2015quantum}. A key conceptual shift from the classical picture is that the underlying Gibbs sampling dynamics~\cite{chen2025efficient} provides a more natural, and perhaps essential, viewpoint on quantum conditional independence without explicitly pinning or measuring a subsystem. This dynamical perspective also motivates extending the Markov property to a broader class of states, namely metastable states, i.e., approximate stationary states $\CL[\vsigma]\approx 0$ of the detailed-balanced Lindbladian dynamics~\cite{bergamaschi2025structural}:
\begin{align}
\text{approximate stationarity}\quad \quad \CL[\vsigma]&\approx 0\quad \\
\text{implies approximate Markov property}\quad\quad \vsigma &\approx \CR\circ \CN_A[\vsigma].
\end{align}
A natural question raised in~\cite{bergamaschi2025structural} is how this noise--recovery property should be interpreted operationally, as a form of robustness or \textit{reusability} of naturally occurring quantum states.\footnote{A revised version of~\cite{bergamaschi2025structural} is in preparation.} In fact, a Markov state can be surprisingly \textit{fragile} to local observers who wish to learn about it. Suppose we perform a local measurement and apply the recovery map. Then the Markov property only guarantees recovery when the measurement outcomes are \textit{discarded}. That is, if we record the measurement outcomes—thereby gaining information about the state—the post-measurement state need not return to the original state.

To address this fragility, \cite{bergamaschi2025structural} proposed a strengthened recovery condition requiring recovery to hold for each post-selected state associated with individual measurement outcomes (i.e., for each Kraus operator). This property, known as the \textit{strong Markov property}, takes the form
\begin{align}
\CR[ \vK \vsigma \vK^{\dagger}] \approx \vsigma \cdot \tr[\vK\vsigma \vK^{\dagger}]
\quad \text{for any local operator $\vK$}.
\end{align}
Operationally, this captures the possibility of repeatedly performing local measurements and recovering the \textit{same} underlying state. Physically, since the recovery map can be implemented by a Lindbladian system-bath evolution~\cite{SA24}, it provides the conceptual intuition for quantum states that remain robust and \textit{stable} in open thermal environments.

While the above proposal in \cite{bergamaschi2025structural} left the mathematical characterization of such states open, this note establishes the formal criteria for the strong Markov property:

\begin{thm}[Informal]
A metastable state satisfies correlation decay if and only if there exists an approximate recovery map under local post-selected measurements.
\end{thm}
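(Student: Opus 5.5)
We prove the two directions separately, taking as given the (ordinary) approximate Markov property of metastable states from~\cite{bergamaschi2025structural}: there is a quasi-local recovery map $\CR$, supported on a neighborhood $B\supset A$, with $\CR\circ\CN_A[\vsigma]\approx\vsigma$ for any noise $\CN_A$ on $A$.

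\textbf{Correlation decay implies strong recovery.} Fix a local Kraus operator $\vK$ on $A$ and write $p=\tr[\vK\vsigma\vK^\dagger]$. We compare $\CR[\vK\vsigma\vK^\dagger]$ with $p\,\vsigma$ by testing against an arbitrary observable $\vO$ with $\norm{\vO}\le 1$. In the Heisenberg picture,
\begin{align}
\tr\big[\vO\,\CR[\vK\vsigma\vK^\dagger]\big]=\tr\big[\vK^\dagger\CR^\dagger[\vO]\vK\,\vsigma\big].
\end{align}
We split $\vO$ into a part supported near $B$ and a part far away, using quasi-locality of $\CR^\dagger$; operator Schmidt or product decompositions are one way to do this. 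For the far part, $\CR^\dagger$ acts essentially trivially. The quantity then becomes a two-point correlation $\tr[\vK^\dagger\vK\,\vO_{\mathrm{far}}\vsigma]$, and correlation decay lets us replace it by $p\,\tr[\vO_{\mathrm{far}}\vsigma]$.

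For the near part, we use the ordinary Markov property applied to the channel $\CN_A[\cdot]=\vK\cdot\vK^\dagger+\sqrt{\vI-\vK^\dagger\vK}\,\cdot\,\sqrt{\vI-\vK^\dagger\vK}$ (assuming $\norm{\vK}\le1$). This is combined with the observation that the recovery map, restricted to the enlarged region, must recover the marginal on $B$ outcome-wise. The last claim does not hold without extra input. The cleanest route we see is to build $\CR$ from the detailed-balanced dynamics, for example as a time-averaged quasi-local Lindbladian evolution $e^{t\CL_B}$. For such a map, the deviation $\CR[\vK\vsigma\vK^\dagger]-p\vsigma$ is controlled by how fast the local dynamics forget the perturbation $\vK\vsigma\vK^\dagger/p$ relative to $\vsigma$. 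Making this precise means showing that the relevant connected correlators $\tr[\vX\vY\vsigma]-\tr[\vX\vsigma]\tr[\vY\vsigma]$, with $\vX=\vK^\dagger\vK$-type operators on $A$ and $\vY$ supported outside $B$, are exactly the error terms.

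\textbf{Strong recovery implies correlation decay.} This direction is more direct. Take $\vK=\sqrt{\vX}$ for a positive local $\vX$ on $A$ (general $\vX$ follows by linearity after decomposing into positive parts), and take $\vY$ supported outside the support of $\CR$. Then $\CR^\dagger[\vY]=\vY$, so
\begin{align}
\tr[\vX\vY\vsigma]=\tr\big[\vY\,\CR[\vK\vsigma\vK^\dagger]\big]\approx p\,\tr[\vY\vsigma]=\tr[\vX\vsigma]\tr[\vY\vsigma],
\end{align}
with error given by the strong-recovery error times $\norm{\vY}$. Care is needed because $\vK$ and $\vY$ may fail to commute if $\vY$ overlaps $A$; choosing $\vY$ outside $B$ avoids this. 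This identifies the ``suitable pairs'' of observables as those separated by the recovery neighborhood.

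\textbf{Main obstacle.} The forward direction is the hard one: turning the averaged Markov recovery into outcome-wise recovery, with errors controlled only by correlation decay and approximate stationarity. We would first try the construction of $\CR$ from local detailed-balanced dynamics, bound $\norm{\CL[\vK\vsigma\vK^\dagger]}$ via $\CL[\vsigma]\approx0$ and locality, and then apply a local-to-global argument with Lieb--Robinson-type bounds. The goal of that argument is to show the perturbation relaxes on $B$, leaving only a far-field imprint whose size is exactly the correlation between $\vK^\dagger\vK$ and distant observables.
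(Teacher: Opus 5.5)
\textbf{The converse is fine; the forward direction has a genuine gap.}

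Your converse (strong recovery $\Rightarrow$ clustering) is correct and is the paper's argument. For $\vY$ supported on $C=(AB)^c$, the adjoint of the trace-preserving recovery map fixes $\vY$, and $\vK$ commutes with $\vY$. Splitting a general $\vX_A$ into four positive parts of norm at most one gives the constant $4\epsilon$.

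The forward direction has a gap that you flag but do not close, and none of the tools you propose produce outcome-wise recovery. First, the ordinary Markov property applied to $\CN_A[\cdot]=\vK\cdot\vK^\dagger+\sqrt{\vI-\vK^\dagger\vK}\,\cdot\sqrt{\vI-\vK^\dagger\vK}$ only controls the sum of the two branches. It also only certifies recovery from noise that leaves the $BC$ marginal intact, whereas a post-selected measurement on $A$ disturbs the buffer around $A$ through the correlations. Second, bounding $\norm{\CL[\vK\vsigma\vK^\dagger]}_1$ via $\CL[\vsigma]\approx 0$ fails, because the post-measurement state is not approximately stationary. What is needed is that the local dynamics relaxes it, not that it is stationary. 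Third, splitting the test observable into a near and a far part before applying $\CR^\dagger$ is the wrong cut. The observable on $C$ to which clustering must be applied is produced by the dynamics, and the near part is exactly where the difficulty lies.

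\textbf{The missing ingredient is a Heisenberg-picture forgetting claim.} You do land on the right recovery map: the time-averaged detailed-balance dynamics $\CR_{AB,t}$ with single-site Pauli jumps on all of $AB$. This is why approximate detailed balance is required on $AB$, not only on $A$. The paper shows $\CR^\dagger_{AB,t}[\vX]\approx\vI_{AB}\otimes\vX'_C$, where $\vI_{AB}\otimes\vX'_C:=\CN^\dagger_{AB}\circ\CR^\dagger_{AB,t}[\vX]$ for the completely depolarizing channel $\CN_{AB}$, and the approximation holds when tested against $\vK\vsigma\vK^\dagger$.

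The proof of this claim proceeds as follows:
\begin{enumerate}
\item Write $\CN^\dagger_{AB}[\vO]-\vO$ as a sum of commutators with Pauli strings.
\item Telescope each string into single-site commutators $[\vA^j,\CR^\dagger_{AB,t}[\vX]]$.
\item Expand $\vK$ in Paulis, at cost $16^{|A|}$.
\item Invoke the $\vsigma$-weighted bound $\labs{\tr[\vsigma\vP\tfrac12[\vA,\CR^\dagger_{AB,t}[\vX]]\vQ]}\le(2c_2\delta_t^{c_1})^{\alpha_1}\alpha_2^{|AB|}$, with arbitrary Pauli insertions $\vP,\vQ$ and $\delta_t\le t^{-1}+\sqrt{\epsilon_{ADB}}$, taken from the proof of Theorem IV.6 of~\cite{bergamaschi2025structural}.
\end{enumerate}
This local-mixing estimate is strictly stronger than using the Markov property as a black box.

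With the claim in hand, the chain is short:
\begin{enumerate}
\item $\tr[\vK\vsigma\vK^\dagger\CR^\dagger_{AB,t}[\vX]]\approx\tr[\vsigma\vK^\dagger\vK\vX'_C]\approx\tr[\vsigma\vK^\dagger\vK]\tr[\vsigma\vX'_C]$, using clustering between $A$ and $C$.
\item The claim again with $\vK=\vI$ gives $\tr[\vsigma\vX'_C]\approx\tr[\CR_{AB,t}[\vsigma]\vX]$.
\item Finally, $\norm{\CR_{AB,t}[\vsigma]-\vsigma}_1\le c|AB|t\sqrt{\epsilon_{ADB}}$ closes the argument.
\end{enumerate}
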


We further establish several structural and operational consequences of strongly Markov states. First, this property enables the estimation of multiple observables from a \textit{single copy} of the state via a repeated measurement--recovery protocol. %Since the recovery map is itself a Lindbladian evolution, this also explains how a system may relax back to equilibrium through interaction with a thermal bath.

Second, pairs of strongly Markov states under the same Lindbladian must be either locally very close or locally very far apart. This phenomenon follows from the single-copy tomography property. Indeed, if two strongly Markov states have slightly different local marginals, the repeated measurement--recovery protocol \textit{amplifies} this difference and produces a \textit{single-shot} distinguisher between them. Because the entire measurement--recovery procedure is quasi-local, the two states must already have been distinguishable within a quasi-local neighborhood. Consequently, the set of local marginals of strongly Markov metastable states is constrained to form well-separated clusters.

Third, strongly Markov states under the same Lindbladian exhibit a form of local extremality: a mixture of two such strongly Markov states with locally distinguishable marginals cannot itself be strongly Markov. In contrast, metastability and the local Markov property are preserved under convex combinations. Put differently, if a strongly Markov state can be expressed as a mixture of two strongly Markov states, then their local marginals must be nearly indistinguishable. This behavior is reminiscent of quantum error-correcting codes, where logical quantum information is accessible only through global measurements and remains invisible locally. Our derivation suggests that such a structure is not unique to highly structured quantum memories. Rather, it appears generically in systems with noncommuting Hamiltonians and in metastable states that need not store quantum information.

\subsection{Main formulations}

Our main results concern the generic relation between several dynamical and static properties of quantum states.

\begin{figure}[t]
\centering
\begin{tikzpicture}[
box/.style={
    draw,
    rounded corners,
    align=center,
    minimum width=3.2cm,
    minimum height=0.9cm
},
arrow/.style={-{Triangle[length=2mm,width=2mm]}, thick},
line/.style={thick}
]

\node[box] (meta)    at (0,2)  {Approximate\\Stationarity};
\node[box] (cluster) at (0,-2) {Clustering};
\node[box] (markov)  at (5,2)  {Local Markov\\Property};
\node[box] (strong)  at (5,0)  {Strong Local\\Markov Property};

\node[box] (ext) at (10,1.6)  {Local\\Extremality};
\node[box] (tom) at (10,0)    {Single-Copy\\Tomography};
\node[box] (sep) at (10,-1.8) {Locally\\Separated};

\draw[arrow] (meta.east) -- (markov.west);

\coordinate (merge) at (2.7,0);
\draw[line]  (meta.east) -- (merge);
\draw[line]  (cluster.east) -- (merge);
\draw[arrow] (merge) -- (strong.west);

\draw[arrow] (strong.south) |- (cluster.east);

\draw[arrow] (strong.north east) -- (ext.west);

\draw[arrow] (strong.east) -- (tom.west);

\draw[arrow] (tom.south) -- (sep.north);

\end{tikzpicture}

\caption{Relations between dynamical and static properties of a quantum state: approximate stationarity, clustering, and (strong) local Markov properties, together with consequences of the strong local Markov property.}
\end{figure}

\begin{itemize}
    \item \textbf{Metastability:} Approximate stationarity under coupling to a finite-temperature bath.
    \item \textbf{Clustering:} Certain pairs of far-apart observables are approximately uncorrelated.
    \item \textbf{(Local) Markov property:} Local noise can be approximately recovered quasi-locally.
    \item \textbf{Strong local Markov property:} Local measurement outcomes can be recovered quasi-locally, conditioned on each outcome.
\end{itemize}

Our results show that the above notions admit natural formulations under which their relations become sharp. We begin by introducing the precise definitions used throughout this note.

The starting point comes from the study of metastable states~\cite{bergamaschi2025structural}, which characterizes approximate stationary states of Lindbladian dynamics (see~\autoref{section:the_lindbladian})
\begin{align}
\norm{\frac{\rd \vsigma}{\rd t}}_1 \approx 0 \quad \text{where}\quad
    \frac{\rd \vsigma}{\rd t}  = \CL[\vsigma] = -i[\vH,\vsigma] +\sum_a \CL_a[\vsigma].
\end{align}
The following alternative formulation, which can be converted back and forth with $\norm{\CL[\vsigma]}_1 \approx 0$ or $\norm{\CL_a[\vsigma]}_1, \norm{[\vH,\vsigma]}_1\approx 0$~\cite{bergamaschi2025structural}, will be more directly applicable for our purposes.

\begin{defn}
    [Approximate Detailed Balance Condition]
    \label{defn:intro_adb}
    We say a state $\vsigma$ satisfies approximate detailed balance under $\vA^a$ with error
    \begin{equation}
        \mathsf{ADB}_a[\vsigma]
        :=
        \iint_{-\infty}^{\infty}
        \lnormp{\vA^a(\omega,t)\sqrt{\vsigma}
        -
        \sqrt{\vsigma}\vrho^{-\frac{1}{2}}\vA^a(\omega,t)\vrho^{\frac{1}{2}}}{2}^2
        \gamma(\omega) g(t)\,
        \rd \omega \rd t,
    \end{equation}
    where $\gamma(\omega)$ is the shifted Metropolis weight~\eqref{eq:Metropolis}, and $g(t) := \frac{1}{\beta \cosh(2\pi t/\beta)}$.
\end{defn}

The main result of~\cite{bergamaschi2025structural} shows that approximate detailed balance implies a local Markov property. To obtain the stronger recovery property, the additional ingredient we identify is a correlation decay condition, commonly referred to as \textit{clustering}. In this note, we denote by $\Lambda$ the full set of qubits and consider tripartitions $ABC = \Lambda.$

\begin{defn}[Clustering]
    We say a state $\vsigma$ satisfies $\epsilon$-clustering for regions $A, C \subset \Lambda$ if for all operators supported on those regions,
    \begin{align}
        \labs{
        \tr[\vsigma\vX_A\vY_C]
        -
        \tr[\vsigma\vX_A]\tr[\vsigma\vY_C]
        }
        \le
        \epsilon
        \norm{\vX_A}\norm{\vY_C},
    \end{align}
    for every pair of operators $\vX_A$, $\vY_C$ supported on $A$ and $C$, respectively.
\end{defn}

There are also several alternative formulations of correlation decay for quantum states that are not a priori equivalent.\footnote{We thank Sengqi Sang and Daniel Ranard for helpful discussions.}
\begin{rmk}
    An approximate stationary state may cluster for small $A$ and large $C$, but not for large $A,C$, such as for logical states in the 4D toric code below a constant temperature.
\end{rmk}

Next, we introduce the Markov properties.
\begin{defn}[Local Markov property]
    A state $\vsigma$ satisfies an $\epsilon$-local Markov property for region $A$ if there exists a channel $\CM_{AB}$ supported on $AB$ such that for any channel $\CN_A$ supported on $A$,
\begin{align}
    \lnorm{
    \CM_{AB}\circ \CN_{A}[\vsigma]
    -
    \vsigma
    }_1
    \le
    \epsilon.
\end{align}
\end{defn}
In the above, the map $\CN_A$ is completely positive and trace-preserving (CPTP). In the case of the strong Markov property, this is further relaxed.
\begin{defn}[Strong local Markov property]
\label{defn:strongMarkov}
A state $\vsigma$ satisfies an $\epsilon$-strong local Markov property for region $A$ if there exists a channel $\CM_{AB}$ supported on $AB$ such that for any operator $\vK$ supported on $A$ with $\norm{\vK}\le1$,
\begin{align}
    \lnorm{
    \CM_{AB}[ \vK \vsigma \vK^{\dagger}]
    -
    \vsigma \cdot \tr[\vK\vsigma \vK^{\dagger}]
    }_1
    \le
    \epsilon.
\end{align}
\end{defn}

Since the operator $\vK\vsigma\vK^\dagger$ need not have unit trace, we need the normalization factor $\tr[\vK\vsigma\vK^\dagger]$. While this formulation is convenient for analysis, it is not operationally the most transparent; we therefore introduce a nearly equivalent form in which the Kraus operators arise from a measurement.

\begin{defn}[Strong local Markov property for measurements]
\label{defn:strongMarkov_measurement}
We say a state $\vsigma$ satisfies an $\epsilon$-strong local Markov property for measurements if there exists a channel $\CM_{AB}$ supported on $AB$ such that for any collection of Kraus operators supported on $A$ satisfying $\sum_i \vK_i^\dagger \vK_i = \vI$,
\begin{align}
    \sum_i
    \lnorm{
    \CM_{AB}[ \vK_i \vsigma \vK_i^{\dagger}]
    -
    \vsigma \cdot \tr[\vK_i\vsigma \vK_i^{\dagger}]
    }_1
    \le
    \epsilon .
\end{align}
\end{defn}

Operationally, this condition means that after performing any local measurement, the state can be approximately recovered regardless of which outcome $i$ is observed. One can pass from~\autoref{defn:strongMarkov} to~\autoref{defn:strongMarkov_measurement} at the cost of a multiplicative factor proportional to the number of Kraus operators\footnote{A finite-dimensional channel admits a finite Kraus representation. However, when describing outcome statistics of general measurements there is no \emph{a priori} bound on the number of Kraus operators.}. Moreover, by the triangle inequality of the 1-norm, strong local Markov for measurements always implies the original Markov property.

In our context, the recovery map is naturally taken to be the detailed-balance Lindbladian dynamics~\cite{chen2023efficient}.

\begin{figure}[t]
\centering
\begin{tikzpicture}[x=1.25cm,y=1.1cm,>=latex]

\node[anchor=west] at (-4.8,0) {(1) Markov property};

\node at (0,0) {$\vrho$};
\draw[->] (0.35,0) -- (1.0,0);

\node[draw,minimum width=1.0cm,minimum height=0.7cm] (CN) at (1.8,0) {$\CN$};

\draw[dashed,thick,->] (1.8,-0.35) -- (2.2,-0.75);
\node[right] at (2.2,-0.75) {$i$};

\draw[->] (2.3,0) -- (2.9,0);

\node[draw,minimum width=1.0cm,minimum height=0.7cm] (R1) at (3.7,0) {$\CR$};

\draw[->] (4.2,0) -- (4.9,0);
\node[right] at (4.9,0) {$\vrho$};

\node[anchor=west] at (-4.8,-1.8) {(2) Strong Markov property};

\node at (0,-1.8) {$\vrho$};
\draw[->] (0.35,-1.8) -- (1.0,-1.8);

\node[draw,minimum width=1.0cm,minimum height=0.7cm] (CE) at (1.8,-1.8) {$\CE$};

\draw[thick,->] (1.8,-2.15) -- (2.2,-2.55);
\node[right] at (2.2,-2.55) {$i$};

\draw[->] (2.3,-1.8) -- (3.0,-1.8);
\node at (3.8,-1.8) {$\vK_i \vrho \vK_i^\dagger$};
\draw[->] (4.6,-1.8) -- (5.2,-1.8);

\node[draw,minimum width=1.0cm,minimum height=0.7cm] (R2) at (6.0,-1.8) {$\CR$};

\draw[->] (6.5,-1.8) -- (7.2,-1.8);
\node[right] at (7.2,-1.8) {$\propto \vrho$};

\end{tikzpicture}

\caption{(1) The Markov property: a recovery map $\CR$ approximately recovers the state from the averaged noise channel $\CN$. (2) The strong Markov property: each post-selected branch can be approximately recovered, up to normalization.}
\label{fig:strong-markov-gadget}
\end{figure}

\subsection{Main results}

We now state the main structural relations between the above notions.
\begin{lem}[Clustering + metastability implies strong Markov]
\phantomsection
\label{lem:main}
Consider a Hamiltonian $\vH$ on $n$ qubits with interaction degree $d$ and inverse temperature $\beta>0$. Consider a tripartition $ABC=\Lambda$. Suppose a state $\vsigma$ satisfies
\begin{enumerate}
\item $\epsilon_{AC}$-clustering between regions $A$ and $C$, and
\item $\epsilon_{ADB}$-approximate detailed balance for all single Pauli jumps on region $AB$
\end{enumerate}

Then there is a family of approximate quasi-local recovery maps $\{\CR_{AB,t}\}_{t \ge 0}$ such that
    \begin{align}
        \lnorm{\CR_{AB,t}[\vK_A\vsigma \vK_A^{\dagger}]-\tr[\vK_A\vsigma \vK_A^{\dagger}]\cdot \vsigma}_1 \le \epsilon_{AC}+ \e^{\mu|AB|}\,\cdot \big(t^{-1}+\sqrt{\epsilon_{\mathsf{ADB}}}\big)^{\lambda}+ c\labs{AB} t \sqrt{\epsilon_{ADB}}  \quad \text{for each}\quad \vK_A, \norm{\vK_A}\le 1,
    \end{align}
    for some $\mu >0$ and $0<\lambda<1$ depending only on $\beta,d$ and an absolute constant $c$.
\end{lem}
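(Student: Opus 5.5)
The plan is to build the recovery map $\CR_{AB,t}$ from the detailed-balance Lindbladian $\CL_{AB}$, which keeps only the Pauli jumps on $AB$ and the Hamiltonian terms touching $AB$. As in \cite{bergamaschi2025structural}, the map will be a time-averaged evolution, e.g. $\CR_{AB,t} := \frac{1}{t}\int_0^t e^{s\CL_{AB}}\,\rd s$ composed with the conditional expectation/projection onto the (approximate) kernel. The error will then split into three pieces.

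\textbf{Step 1 (approximate stationarity under $\CL_{AB}$).} From the $\epsilon_{ADB}$-approximate detailed balance (\autoref{defn:intro_adb}), I would derive $\norm{\CL_{AB}[\vsigma]}_1 \le c|AB|\sqrt{\epsilon_{ADB}}$. Each Pauli jump contributes $O(\sqrt{\mathsf{ADB}_a})$ by Cauchy--Schwarz on the Hilbert--Schmidt expression, and the Hamiltonian part is handled similarly. By Duhamel, $\norm{e^{s\CL_{AB}}[\vsigma]-\vsigma}_1 \le s\, c|AB|\sqrt{\epsilon_{ADB}}$. Averaging up to time $t$ gives the term $c|AB|t\sqrt{\epsilon_{ADB}}$.

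\textbf{Step 2 (mixing on $AB$ toward the conditional state).} The key structural input, taken from the local Markov proof of \cite{bergamaschi2025structural}, is that the evolution restricted to $AB$ forgets the input on $A$ and drives it toward $\vsigma$ conditioned on what lies outside. Concretely, for any operator $\vX$ of the form $\vK_A\vsigma\vK_A^\dagger$, I expect
\begin{align}
\CR_{AB,t}[\vK_A \vsigma \vK_A^\dagger] \approx \mathcal{P}\big[\vK_A\vsigma\vK_A^\dagger\big],
\end{align}
where $\mathcal{P}$ is a projection onto the fixed points of $\CL_{AB}$ near $\vsigma$. Its action on such inputs is a ``reweighting'' $\vsigma\mapsto \vsigma\cdot \tr_{AB}$-type conditional density, roughly $\vsigma^{1/2}\,\mathbb{E}[\vK_A^\dagger\vK_A\mid C]\,\vsigma^{1/2}$. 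The quantitative convergence $\e^{\mu|AB|}(t^{-1}+\sqrt{\epsilon_{ADB}})^{\lambda}$ should come from the same spectral-gap/mean-ergodic argument as for the ordinary Markov property. Because $\vK_A\vsigma\vK_A^\dagger$ is not a perturbation by a channel, I will write $\vK_A\vsigma\vK_A^\dagger = \vsigma^{1/2}\vY\vsigma^{1/2}$ plus a correction controlled by approximate detailed balance. Here $\vY = \vsigma^{-1/2}\vK_A\vsigma\vK_A^\dagger\vsigma^{-1/2}$ is handled in the KMS-weighted $\vsigma$-picture, where the dynamics acts by the adjoint, and the bound is transferred back.

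\textbf{Step 3 (clustering removes the residual dependence on $C$).} After Step 2 the output is approximately $\vsigma^{1/2}\vZ_C\vsigma^{1/2}$, with $\vZ_C$ an effective operator on $C$ (the complement of $AB$) encoding $\vK_A^\dagger\vK_A$, and $\norm{\vZ_C}\le 1$. The strong Markov target requires $\vZ_C\approx \tr[\vK_A\vsigma\vK_A^\dagger]\,\vI$. Testing against observables $\vY_C$ and using duality, the deviation of $\vZ_C$ from a scalar is exactly the connected correlator $\tr[\vsigma \vK_A^\dagger\vK_A \vY_C]-\tr[\vsigma\vK_A^\dagger\vK_A]\tr[\vsigma\vY_C]$. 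This is bounded by $\epsilon_{AC}$ via clustering with $\vX_A=\vK_A^\dagger\vK_A$, $\norm{\vX_A}\le1$. Taking the supremum over $\norm{\vY_C}\le 1$ converts this into the trace-norm error $\epsilon_{AC}$. Summing the three contributions by the triangle inequality yields the claimed bound.

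\textbf{Main obstacle.} The main obstacle is Step 2. I must show that the local dynamics converges, for non-channel inputs $\vK_A\vsigma\vK_A^\dagger$, to a limit of the form $\vsigma^{1/2}\vZ_C\vsigma^{1/2}$, with only an $\e^{\mu|AB|}$ loss. I would first try to reuse verbatim the weak-Markov mixing estimate from \cite{bergamaschi2025structural}, by linearity. Writing $\vK_A=\sum_P k_P P$ in the Pauli basis on $A$ reduces the input to the terms $P\vsigma Q$, at a $4^{|A|}$ cost absorbed into $\e^{\mu|AB|}$. Each such term can be related to the fixed point through the same contraction argument.
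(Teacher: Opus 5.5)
\textbf{There is a genuine gap in Steps 2--3.} Your ingredients are the right ones: the time average of $\sum_{a\in P^1_{AB}}\CL_a$, the stationarity term $c|AB|t\sqrt{\epsilon_{ADB}}$, a Pauli expansion of $\vK_A$, and clustering with $\vX_A=\vK_A^\dagger\vK_A$. But the intermediate object in Steps 2--3 is the wrong one, and Step 3 does not follow from it.

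Even granting $\CR_{AB,t}[\vK_A\vsigma\vK_A^\dagger]\approx\vsigma^{1/2}\vZ_C\vsigma^{1/2}$, you must bound $\norm{\vsigma^{1/2}(\vZ_C-q\vI)\vsigma^{1/2}}_1$ with $q=\tr[\vK_A\vsigma\vK_A^\dagger]$. This is a supremum over \emph{global} test operators $\vX$. The quantity $\tr[\vX\vsigma^{1/2}(\vZ_C-q\vI)\vsigma^{1/2}]$ is not a connected correlator between $\vK_A^\dagger\vK_A$ and an operator on $C$. Taking the supremum over $\vY_C$ only controls the $C$-marginal of the difference.

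Step 2 is also unsupported. $\sum_a\CL_a$ is detailed balanced with respect to the Gibbs state $\vrho$, not $\vsigma$. So there is no ``projection onto fixed points near $\vsigma$'' or gap argument that produces the sandwich form, and $\vZ_C$ is never specified. Nor can you obtain the single-branch statement from Theorem~\ref{thm:local_recovery} by linearity. Every $\CN_A[\vsigma]$ has $BC$-marginal $\vsigma_{BC}$, whereas $\vK_A\vsigma\vK_A^\dagger$ has marginal $\tr_A[\vK_A^\dagger\vK_A\vsigma]$, which is generally not in their span.

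\textbf{The missing idea is to work with the Heisenberg-evolved test operator.} What \cite[Proof of Theorem IV.6]{bergamaschi2025structural} actually supplies is the uniform bound $\labs{\tr[\vsigma\vP[\vA^a,\CR^\dagger_{AB,t}[\vX]]\vQ]}\le(2c_2\delta_t^{c_1})^{\alpha_1}\alpha_2^{|AB|}$. It holds for all $\norm{\vX}\le1$, all $\vP,\vQ\in P_{AB}\cup\{\vI\}$, and all single-site $\vA^a$ on $AB$, with $\delta_t=t^{-1}+\sqrt{\epsilon_{ADB}}$.

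Now let $\CN_{AB}$ be the completely depolarizing channel on $AB$. Write $\CN_{AB}^\dagger[\vO]-\vO$ as a sum of commutators with Pauli strings, split each string into single-site factors, and Pauli-expand $\vK_A$ (your $P\vsigma Q$ terms). This gives, uniformly in $\vX$, $\tr[\vK_A\vsigma\vK_A^\dagger\,\CR^\dagger_{AB,t}[\vX]]\approx\tr[\vK_A\vsigma\vK_A^\dagger\,\vX'_C]$, where $\vX'_C:=\CN_{AB}^\dagger\CR^\dagger_{AB,t}[\vX]$ is supported on $C$ with $\norm{\vX'_C}\le1$. Equivalently, $\CR_{AB,t}[\vK_A\vsigma\vK_A^\dagger]\approx\CR_{AB,t}\circ\CN_{AB}[\vK_A\vsigma\vK_A^\dagger]$: the output depends on the input only through its $C$-marginal.

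With this replacing your Step 2, your Step 3 becomes valid. The bound $\norm{\CR_{AB,t}\CN_{AB}[\vK_A\vsigma\vK_A^\dagger-q\vsigma]}_1\le\norm{\tr_{AB}[\vK_A\vsigma\vK_A^\dagger-q\vsigma]}_1\le\epsilon_{AC}$ is now exactly the supremum over $C$-observables of the connected correlator. You then apply the same estimate once more with $\vK_A=\vI$ to pass from $\CR_{AB,t}\CN_{AB}[\vsigma]$ to $\CR_{AB,t}[\vsigma]$, and finish with your Step 1.

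Keep $\CR_{AB,t}$ as the plain time average of $\sum_{a\in P^1_{AB}}\CL_a$. An extra $-\ri[\vH_{AB},\cdot]$ term or a projection onto the kernel is not covered by these estimates.
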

Therefore, for sufficiently large time $t$ and small $\epsilon_{ADB}$, the error is limited by that of clustering. Specifically, the recovery map is entirely defined by the Hamiltonian and is independent of the metastable states. A caveat of the recovery guarantee, however, is that the recovery time depends on the desired error, which may be quasi-polynomial. This scaling is rooted in the worst-case analysis of a certain mixing guarantee~\cite{bergamaschi2025structural} on the quasi-local neighborhood, which we believe could be much better depending on the specific physics and phase of the system.

The above implication admits a complementary converse, relating recovery to correlation decay.

\begin{lem}[Strong Markov implies clustering]
Consider a tripartition $ABC = \Lambda$. Suppose that a state $\vsigma$ satisfies the $\epsilon$-strong local Markov property for operators supported on region $A$, with an approximate recovery map $\CM$ acting on $AB$. Then the state satisfies $4\epsilon$-clustering between $A$ and $C$.
\end{lem}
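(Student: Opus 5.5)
The plan is to use the fact that the recovery map $\CM_{AB}$ acts trivially on $C$. Observables on $C$ are then conserved under recovery, and the strong Markov property turns directly into a factorization statement for correlators of the form $\tr[\vsigma\, \vK^\dagger\vK\, \vY_C]$. The general clustering bound for arbitrary $\vX_A$ then follows by writing $\vX_A$ as a combination of four positive operators. This decomposition is where the constant $4$ comes from.

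\textbf{Step 1 (conservation of $C$-observables).} Since $\CM_{AB}$ is a CPTP map supported on $AB$, its Heisenberg-picture dual $\CM_{AB}^\dagger$ is unital and acts as the identity on $C$. Hence $\CM_{AB}^\dagger[\vY_C]=\vY_C$ for every $\vY_C$ supported on $C$. For any $\vK$ supported on $A$ with $\norm{\vK}\le 1$ this gives
\begin{align}
\tr\big[\vY_C\, \CM_{AB}[\vK\vsigma\vK^\dagger]\big] = \tr[\vY_C \vK\vsigma\vK^\dagger] = \tr[\vsigma\, \vK^\dagger\vK\, \vY_C].
\end{align}
The last equality uses cyclicity of the trace and the fact that $\vK$ and $\vY_C$ commute, being supported on disjoint regions.

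\textbf{Step 2 (clustering for positive $\vX_A$).} Pair the strong Markov bound of \autoref{defn:strongMarkov} with $\vY_C$ via the H\"older inequality $|\tr[\vY\vZ]|\le \norm{\vY}\norm{\vZ}_1$. This yields
\begin{align}
\big|\tr[\vsigma \vK^\dagger\vK \vY_C]-\tr[\vsigma\vK^\dagger\vK]\tr[\vsigma\vY_C]\big|\le \epsilon\norm{\vY_C}.
\end{align}
Every positive operator $\vP_A$ with $\norm{\vP_A}\le 1$ can be written as $\vK^\dagger\vK$ with $\vK=\sqrt{\vP_A}$, which is supported on $A$ and has $\norm{\vK}\le1$. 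Therefore the clustering inequality holds with constant $\epsilon$ for all such $\vP_A$.

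\textbf{Step 3 (general $\vX_A$).} By homogeneity, assume $\norm{\vX_A}=1$. Write $\vX_A=\vH_1+\ri\vH_2$, where $\vH_1=(\vX_A+\vX_A^\dagger)/2$ and $\vH_2=(\vX_A-\vX_A^\dagger)/(2\ri)$ are Hermitian with norms at most $1$. Split each as $\vH_j=\vH_j^+-\vH_j^-$ into positive and negative parts, each supported on $A$ with norm at most $1$. The covariance $\tr[\vsigma\vX_A\vY_C]-\tr[\vsigma\vX_A]\tr[\vsigma\vY_C]$ is linear in $\vX_A$. Applying Step 2 to each of the four positive pieces and using the triangle inequality then gives the bound $4\epsilon\norm{\vX_A}\norm{\vY_C}$.

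I expect no step to be a serious obstacle. The only point needing care is Step 1: one must check that the dual of a channel supported on $AB$ really fixes $C$-observables exactly, which holds because the channel has the form $\CM_{AB}\otimes \mathrm{id}_C$. The constant $4$ is not optimized. Handling Hermitian $\vX_A$ through the two spectral parts directly, or using the definition in its measurement form, could tighten it, but this is unnecessary for the stated claim.
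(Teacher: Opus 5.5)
Your proposal is correct and follows essentially the same argument as the paper. The paper also writes $\vX_A$ as a signed/complex combination of four positive pieces $\vK_i^\dagger\vK_i$. It also uses that $\CM_{AB}$ acts trivially on $C$, phrased via trace preservation as $\tr[\vY_C\vK_i\vsigma\vK_i^\dagger]=\tr[\CM_{AB}[\vY_C\vK_i\vsigma\vK_i^\dagger]]=\tr[\vY_C\CM_{AB}[\vK_i\vsigma\vK_i^\dagger]]$, which is the Schr\"odinger-picture form of your Step 1. It then applies the strong Markov bound to each piece to get the factor $4$. Your write-up is slightly more explicit than the paper's about the Hermitian/positive-part decomposition and the bound $\norm{\sqrt{\vP}}\le 1$.
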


Here, the recovery map need not satisfy detailed balance. However, if strong Markov holds with \textit{any} recovery map, we must have correlation decay, and hence, the natural Lindbladian dynamics must also work, with different parameters.

In the above, the relation between clustering and strong Markov is stated pointwise for each specific annulus tripartition. Therefore, the quantum state may still exhibit global, long-range correlation but nevertheless satisfies strong Markov properties for local neighborhoods--- such as a logical state of a quantum code.

The strong Markov property has several operational and structural consequences. Indeed, when each measurement outcome can be individually recovered, we can obtain the marginals of the state by repeatedly measuring and recovering from a single copy of the state, as the consecutive outcomes are effectively i.i.d.
\begin{lem}[Strongly Markov states are repeatable]
\label{lem:repeatable_main}
Suppose a state $\vsigma$ satisfies the $\epsilon$-strong local Markov property for measurement channel $\CK = \sum_{i=1}^k \vK_i[\cdot]\vK_i^{\dagger}$ supported on region $A\subset \Lambda$ with recovery channel $\CM$.
Then for each outcome $i$, the empirical mean estimator $\hat{\mu}_i$ from repeated measurement--recovery satisfies
\begin{align}
         \Pr\L(\labs{\hat{\mu}_i - \tr[\vK_i^{\dagger}\vK_i\vsigma]}\ge \tau\R) \le 2\exp\L(-2r\tau^2\R) + r \epsilon
\end{align}
while returning a state $\vsigma'$ such that $\norm{\vsigma-\vsigma'}_1 \le r \epsilon.$
\end{lem}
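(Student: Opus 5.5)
The plan is to compare the actual protocol with an idealized one in which every recovery is exact, and then apply Hoeffding's inequality to the idealized outcomes. The protocol is: for rounds $j=1,\dots,r$, apply the instrument $\{\vK_i\}$ to the current state, record the outcome $i_j$, and apply $\CM$. The estimator is $\hat\mu_i = \frac1r\sum_{j}\indicator[i_j=i]$. We encode the whole protocol as a single quantum instrument with a classical register. Define the per-round map
\begin{align}
\Phi[\vX] := \sum_{i=1}^k \ketbra{i}{i}\otimes \CM[\vK_i \vX \vK_i^{\dagger}],
\end{align}
and the idealized per-round map
\begin{align}
\Phi^{\mathrm{id}}[\vX] := \sum_{i} \ketbra{i}{i}\otimes \tr[\vK_i\vX\vK_i^\dagger]\,\vsigma.
\end{align}
The $r$-round protocol is $\Phi^{(r)}$, obtained by composing $\Phi$ $r$ times, with each round writing into a fresh classical register.

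The first step is the single-round bound. Using the block-diagonal structure of the classical register,
\begin{align}
\norm{\Phi[\vsigma]-\Phi^{\mathrm{id}}[\vsigma]}_1 = \sum_i \norm{\CM[\vK_i\vsigma\vK_i^\dagger]-\tr[\vK_i\vsigma\vK_i^\dagger]\vsigma}_1 \le \epsilon,
\end{align}
which is exactly \autoref{defn:strongMarkov_measurement}.

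The second step is a telescoping hybrid argument over rounds. Let $H_j$ be the hybrid that uses $\Phi^{\mathrm{id}}$ for the first $j$ rounds and $\Phi$ for the remaining rounds. Under $\Phi^{\mathrm{id}}$ the quantum register is reset to exactly $\vsigma$ after each round, tensored with a classical record. Hence the next round again acts on $\vsigma$, and $H_{j-1}$ and $H_j$ differ only by replacing one application of $\Phi$ by $\Phi^{\mathrm{id}}$ on the input $\vsigma$ (tensored with a classical register). Both maps are CPTP, and later rounds are also CPTP, so the trace distance cannot increase afterwards. Each swap therefore costs at most $\epsilon$, and summing gives
\begin{align}
\norm{\Phi^{(r)}[\vsigma]-(\Phi^{\mathrm{id}})^{(r)}[\vsigma]}_1\le r\epsilon.
\end{align}

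The third step treats the ideal protocol. Its output is $p^{\otimes r}\otimes\vsigma$, where $p_i=\tr[\vK_i^\dagger\vK_i\vsigma]$. The outcomes are therefore exactly i.i.d., and Hoeffding's inequality gives $\Pr(|\hat\mu_i-p_i|\ge\tau)\le 2e^{-2r\tau^2}$.

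The fourth step transfers both conclusions to the real protocol. The total variation distance of the classical marginals is bounded by the trace distance of the joint states. Thus the probability of any event on the outcomes, in particular the deviation event, changes by at most $r\epsilon$; one may even use $r\epsilon/2$ depending on the normalization convention. Tracing out the classical registers shows that the returned state $\vsigma'$ satisfies $\norm{\vsigma'-\vsigma}_1\le r\epsilon$, by monotonicity of the trace norm under partial trace.

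The main point requiring care is the hybrid step. The naive approach would track the real state $\vsigma_j$, which is only approximately $\vsigma$, and apply the strong Markov property to it. That fails, because the property is only assumed at $\vsigma$ itself. Ordering the hybrids so that the replaced round always acts on the exact $\vsigma$ avoids this problem. The remaining ingredient is contractivity of trace distance under the subsequent CPTP rounds, including the classical registers.
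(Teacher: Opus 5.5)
Your proposal is correct and is essentially the paper's argument: the telescoping sum in \autoref{lem:TV} uses your hybrid ordering (ideal rounds first, so the swapped round acts on the exact $\vsigma$, followed by real rounds), and \autoref{lem:sum_trace} supplies contractivity of the later rounds, proved via the same classical-register channel you build into $\Phi$. Hoeffding under the product distribution, followed by the total-variation transfer, then finishes the proof as you describe; your $r\epsilon/2$ remark is a valid slight sharpening.
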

The higher the quality of the strong Markov property, the more rounds we can repeatedly recover. This also immediately allows us to distinguish two strongly Markov states with slightly different marginals by repeated measurement to sufficient precision. Consequently, the marginals must differ substantially in the neighborhood on which the recovery map is approximately supported.

\begin{lem}[Locally close but distinguishable]
\label{lem:local_close}
Consider any two $\epsilon$-strongly Markov states $\vsigma_1$ and $\vsigma_2$ under the same recovery map $\CM$. Suppose that the marginal on $A$ is at least $\delta$ far in trace distance $\norm{\tr_{BC}[\vsigma_1]-\tr_{BC}[\vsigma_2]}_1 \ge \delta$. Then there is a measurement-recovery procedure that distinguishes the case $\vsigma_1$ from $\vsigma_2$ with failure probability $\frac{16\epsilon}{\delta^2}\log ( \frac{e\delta^2}{8\epsilon})+2\epsilon.$
\end{lem}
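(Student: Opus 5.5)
The plan is to build a distinguisher from the repeated measurement--recovery protocol of \autoref{lem:repeatable_main}, applied to a single two-outcome measurement that optimally separates the two marginals on $A$. By the Helstrom bound, since $\norm{\tr_{BC}[\vsigma_1]-\tr_{BC}[\vsigma_2]}_1\ge\delta$, there is a projector $\vP_A$ on $A$ with $\tr[\vP_A\vsigma_1]-\tr[\vP_A\vsigma_2]\ge \delta/2$. Take the measurement channel with Kraus operators $\vK_1=\vP_A$ and $\vK_2=\vI-\vP_A$ (so $k=2$ and $\vK_1^\dagger\vK_1=\vP_A$). We alternate this measurement with the common recovery $\CM$ for $r$ rounds, form the empirical frequency $\hat\mu_1$ of outcome $1$, and guess $\vsigma_1$ if $\hat\mu_1$ exceeds the midpoint $\tfrac12(\tr[\vP_A\vsigma_1]+\tr[\vP_A\vsigma_2])$ and $\vsigma_2$ otherwise. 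Because the \emph{same} $\CM$ serves both states, the procedure does not depend on which state is present.

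For the error analysis, I would use the coupling underlying \autoref{lem:repeatable_main}. By the measurement form of strong Markov (\autoref{defn:strongMarkov_measurement}, which with $k=2$ costs at most a constant factor over \autoref{defn:strongMarkov}; I will take $\epsilon$ to be the measurement-form constant so that no factor appears), each round's joint distribution of outcome and post-recovery state is within $\epsilon$ in total variation of the ideal process. In the ideal process the outcome is an i.i.d.\ Bernoulli$(\tr[\vP_A\vsigma_j])$ variable and the state is reset to $\vsigma_j$. Summing over rounds by the triangle inequality, the real transcript is within $r\epsilon$ of the ideal i.i.d.\ transcript. Hoeffding's inequality with deviation $\tau=\delta/4$ then bounds the failure probability under either hypothesis by $2\exp(-r\delta^2/8)+r\epsilon$.

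It remains to optimize over $r$. Minimizing $f(r)=2e^{-r\delta^2/8}+r\epsilon$ gives $r^\star=\frac{8}{\delta^2}\log\frac{\delta^2}{4\epsilon}$, at which $f(r^\star)=\frac{8\epsilon}{\delta^2}\bigl(1+\log\frac{\delta^2}{4\epsilon}\bigr)=\frac{8\epsilon}{\delta^2}\log\frac{e\delta^2}{4\epsilon}$. This is the same shape as the claimed bound. The stated constants $\frac{16\epsilon}{\delta^2}\log\frac{e\delta^2}{8\epsilon}+2\epsilon$ probably come from slightly looser bookkeeping: a Hoeffding exponent of $r\delta^2/16$, and the extra $2\epsilon$ from rounding $r^\star$ up to an integer, which costs at most $\epsilon$ per hypothesis. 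I would pick $r$ and the threshold to reproduce exactly that form.

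The main obstacle is the coupling step, namely justifying that the errors add linearly even though the outcomes are adaptive and correlated through the post-measurement state. I would handle this with a hybrid argument. Define processes in which the first $s$ rounds are ideal (the state is resampled as $\vsigma_j$ and the outcome drawn from the Born rule) and the remaining rounds are real. Consecutive hybrids differ only in one round, where the difference of the classical--quantum states $\sum_i \ketbra{i}{i}\otimes \CM[\vK_i\vsigma_j\vK_i^\dagger]$ and $\sum_i \ketbra{i}{i}\otimes \tr[\vK_i\vsigma_j\vK_i^\dagger]\,\vsigma_j$ has trace norm at most $\epsilon$ by the measurement-form strong Markov property. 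Later rounds act by CPTP maps and cannot increase trace distance, so the $r$ hybrid steps sum to $r\epsilon$.
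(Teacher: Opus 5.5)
Your proposal is correct and follows the paper's proof: the Helstrom projector, a two-outcome measurement, and the repeated measurement--recovery lemma at $\tau=\delta/4$, whose proof is exactly your hybrid (telescoping) bound. The one correction is your guess about where the constants come from: the hypothesis is the per-operator property of \autoref{defn:strongMarkov}, so the triangle inequality over the two Kraus operators gives the measurement-form constant $\epsilon'=2\epsilon$. Substituting this into your own optimized bound $\frac{8\epsilon'}{\delta^2}\log\frac{e\delta^2}{4\epsilon'}+\epsilon'$ (the last term comes from rounding $r=\lceil\frac{8}{\delta^2}\log\frac{\delta^2}{4\epsilon'}\rceil$ up) gives exactly $\frac{16\epsilon}{\delta^2}\log\frac{e\delta^2}{8\epsilon}+2\epsilon$, so no weaker Hoeffding exponent is involved.
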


Moreover, strongly Markov states behave like extremal points with respect to local marginals.

\begin{lem}[Locally extremal]
Consider a region $A$ and a recovery map $\CM$ for which an $\epsilon$-strongly Markov state $\vsigma$ is a mixture over two $\epsilon$-strongly Markov states
\begin{align}
    \vsigma = p_1\vsigma_1 + p_2 \vsigma_2.
\end{align}
Then the local marginals are close
\begin{align}
\normp{\vsigma^{(A)}_1-\vsigma_2^{(A)}}{1} \le 2\sqrt{\frac{2\epsilon}{p_1p_2}}.
\end{align}
\end{lem}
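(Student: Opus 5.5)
The plan is to exploit linearity of the recovery map $\CM$ against the convex decomposition: apply the strong Markov guarantee three times, once each to $\vsigma$, $\vsigma_1$ and $\vsigma_2$, using the same Kraus operator $\vK$ supported on $A$. Subtracting the three approximate identities should isolate a term proportional to $\vsigma_1-\vsigma_2$. The coefficient of that term depends on how differently $\vK$ "sees" the two states, so choosing $\vK$ to be a Helstrom-type projector on $A$ should turn this into a bound on the marginal distance.

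Concretely, fix $\vK$ on $A$ with $\norm{\vK}\le 1$, and write $t_j:=\tr[\vK^\dagger\vK\vsigma_j]$ and $t:=\tr[\vK^\dagger\vK\vsigma]=p_1t_1+p_2t_2$. Linearity of $\CM$ gives $\CM[\vK\vsigma\vK^\dagger]=p_1\CM[\vK\vsigma_1\vK^\dagger]+p_2\CM[\vK\vsigma_2\vK^\dagger]$.
\begin{itemize}
\item The left side is within $\epsilon$ of $t\,\vsigma$ in trace norm.
\item The right side is within $p_1\epsilon+p_2\epsilon=\epsilon$ of $p_1t_1\vsigma_1+p_2t_2\vsigma_2$.
\end{itemize}
Hence $\lnormp{t\vsigma-p_1t_1\vsigma_1-p_2t_2\vsigma_2}{1}\le 2\epsilon$. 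Substituting $\vsigma=p_1\vsigma_1+p_2\vsigma_2$ and using $t-t_1=p_2(t_2-t_1)$ and $t-t_2=p_1(t_1-t_2)$, the difference collapses exactly to
\begin{align}
p_1p_2\,(t_2-t_1)\,(\vsigma_1-\vsigma_2),
\qquad\text{so}\qquad
p_1p_2\,\labs{t_1-t_2}\,\normp{\vsigma_1-\vsigma_2}{1}\le 2\epsilon .
\end{align}

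Finally choose $\vK=\vP$, the projector onto the positive part of $\vsigma_1^{(A)}-\vsigma_2^{(A)}$. Since $\vsigma_1-\vsigma_2$ is traceless, its marginal $\vsigma_1^{(A)}-\vsigma_2^{(A)}$ is also traceless, so $t_1-t_2=\tfrac12\normp{\vsigma^{(A)}_1-\vsigma^{(A)}_2}{1}=:\delta/2$. Monotonicity of trace distance under partial trace gives $\normp{\vsigma_1-\vsigma_2}{1}\ge\delta$. Therefore $p_1p_2\delta^2/2\le 2\epsilon$, i.e.\ $\delta\le 2\sqrt{\epsilon/(p_1p_2)}$, which is at least as strong as the claimed $2\sqrt{2\epsilon/(p_1p_2)}$; the extra slack absorbs any looser bookkeeping of the error terms.

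The only delicate step is the algebraic cancellation: one must check that the normalization factors combine so that no residual multiple of $\vsigma$ survives, leaving exactly $p_1p_2(t_2-t_1)(\vsigma_1-\vsigma_2)$. This requires no hypothesis beyond linearity of $\CM$ and the fact that all three states share the same recovery map, which is exactly why the lemma insists on a common $\CM$. The rest is the standard Helstrom choice of projector.
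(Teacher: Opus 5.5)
Your proof is correct and is essentially the paper's argument: you apply the strong Markov bound three times, combine through linearity of the common recovery map, reduce to $p_1p_2(q_1-q_2)(\vsigma_1-\vsigma_2)\stackrel{2\epsilon}{\approx}0$, and take the Helstrom projector on $A$ as $\vK$. The only difference is the last step: the paper pairs this operator once more with $\vK^\dagger\vK$ to get $p_1p_2(q_1-q_2)^2\le 2\epsilon$, whereas you use $\normp{\vsigma_1-\vsigma_2}{1}\ge\delta$ by monotonicity under partial trace, which legitimately improves the final bound by a factor $\sqrt{2}$.
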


 \acknowledgments
We thank Thiago Bergamaschi and Umesh Vazirani, András Gilyén, Robbie King, and Cambyse Rouzé for inspiring collaborations~\cite{bergamaschi2025structural, chen2025catalytic,chen2026efficient, chen2025GibbsMarkov}. We thank David Gamarnik, Sarang Gopalakrishnan, Aram Harrow, Vedika Khemani, Isaac Kim, Alexei Kitaev, Tomotaka Kuwahara, Lin Lin, Tony Metger, Daniel Ranard, Sengqi Sang, and Alexander Zlokapa for helpful discussions. Near the completion of this note, we became aware of independent related work by Zhi Li, Raz Firanko, and Timothy H.~Hsieh.

\section{Preliminary}
\label{section:the_lindbladian}
Central to our study is the KMS-detailed-balance Lindbladian family of \cite{chen2023efficient,chen2023quantum}; its explicit form will be used throughout the paper. Fix a Hamiltonian $\vH$ on $n$ qubits with interaction degree $d$, an inverse temperature $\beta>0$, and a single self-adjoint jump $\vA^a = \vA^{a\dagger}$. We consider the (quasi-local) Lindbladian defined by
\begin{align}
		\CL_a[\cdot] = \underset{\text{``coherent''}}{\underbrace{ -\ri [\vC^a, \cdot]}} +
		\int_{-\infty}^{\infty} \gamma(\omega) \bigg(\underset{\text{``transition''}}{\underbrace{\hat{\vA}^a(\omega)(\cdot)\hat{\vA}^{a}(\omega)^\dagg}} - \underset{\text{``decay''}}{\underbrace{\frac{1}{2}\{\hat{\vA}^{a}(\omega)^\dagg\hat{\vA}^a(\omega),\cdot\}}}\bigg)\rd\omega\label{eq:exact_DB_L}
\end{align}
\noindent with the shifted-Metropolis weight
\begin{align}
    \gamma(\omega) = \exp\L(-\beta\max\left(\omega +\frac{\beta \sigma^2}{2},0\right)\R).\label{eq:Metropolis}
\end{align}
\noindent The central ingredient in the Lindbladian above is the operator Fourier transform \cite{chen2023quantum,chen2023efficient}. The operator FT of an operator $\vA$, associated to the Hamiltonian $\vH$, can be written as
\begin{align}\label{eq:OFT}
{\hat{\vA}}_\sigma(\omega)=  \frac{1}{\sqrt{2\pi}}\int_{-\infty}^{\infty} \e^{\ri \vH t} \vA \e^{-\ri \vH t} \e^{-\ri \omega t} f_{\sigma}(t)\rd t\quad \text{with}\quad f_{\sigma}(t) := e^{-\sigma^2t^2}\sqrt{\sigma\sqrt{2/\pi}}
    \end{align}

\noindent where the function $f_\sigma(t)$ is a Gaussian filter of energy \textit{width} $\sigma$, which we take to be $\sigma =1/\beta$ throughout this note. The ``coherent part'' $\vC^a$ is a Hermitian operator
\begin{align}
    &\vC^a := \lim_{\eta \rightarrow 0}\int_{-\infty}^{\infty}\int_{\labs{t} \ge \eta} \gamma(\omega)  c(t) \cdot \hat{\vA}^a(\omega,t)^{\dagger}\hat{\vA}^a(\omega,t)  \rd t\rd \omega, \quad  \\\text{with}\quad  &c(t) := \frac{1}{\beta\sinh(2\pi t/\beta)} \quad \text{and}\quad  \hat{\vA}^a(\omega,t):=\e^{i\vH t}\hat{\vA}^a(\omega)\e^{-i\vH t}.
\end{align}

The main connection between the Lindbladian and the Markov properties is that the time-averaged Lindbladian dynamics gives a natural recovery map
\begin{align}
    \CR_{A,t}[\cdot]:= \frac{1}{t}\int_{0}^t \exp\L(s\,\sum_{a\in P^1_A}\CL_a\R)[\cdot] \,\rd s\label{eq:RAt_intro}
\end{align}
for generators $\CL_a$ with jumps $\vA^a \in P^1_A$ ranging over all single-site Pauli operators ($\vX_i,\vY_i,\vZ_i$) acting on the region $A$.

\begin{thm}[Approximate Detailed Balance implies a Local Markov Property{\cite{bergamaschi2025structural}}]\label{thm:local_recovery}
Consider a Hamiltonian $\vH$ on $n$ qubits with interaction degree $d$ and inverse temperature $\beta>0$. Suppose a state $\vsigma$ satisfies approximate detailed balance (\autoref{defn:intro_adb}) for all single-site Pauli operators on $\mathsf{A}$, with error $\epsilon_{\mathsf{ADB}} := \max_{\vP\in P^1_\mathsf{A}} \mathsf{ADB}_{\vP}[\vsigma].$ Then for any $t>0$, the time-averaged dynamics $\CR_{\mathsf{A},t}$ \eqref{eq:RAt_intro} defines an approximate recovery map for $\vsigma$, with error
 \begin{align}
     \norm{\vsigma-\CR_{\mathsf{A},t}[\CN_{\mathsf{A}}[\vsigma] ]}_1 \le \e^{\mu|\mathsf{A}|}\,\cdot t^{-\lambda} + c|\mathsf{A}|\cdot t\cdot \epsilon_{\mathsf{ADB}}^{1/2}
 \end{align}
 \noindent for some $0<\mu <\mathsf{poly}(\beta, \beta^{-1}, d) $ and $1 > \lambda> 1/\mathsf{poly}(\beta, \beta^{-1}, d)$ and an absolute constant $c$.
\end{thm}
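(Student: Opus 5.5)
The plan is to split the error into a \emph{stationarity} part and a \emph{mixing} part. Write $\CL_{\mathsf{A}} := \sum_{a\in P^1_{\mathsf{A}}}\CL_a$ and $\CT_s := \e^{s\CL_{\mathsf{A}}}$, and insert $\CR_{\mathsf{A},t}[\vsigma]$ via the triangle inequality:
\begin{align}
\norm{\vsigma-\CR_{\mathsf{A},t}[\CN_{\mathsf{A}}[\vsigma]]}_1 \le \norm{\vsigma-\CR_{\mathsf{A},t}[\vsigma]}_1 + \norm{\CR_{\mathsf{A},t}[\CN_{\mathsf{A}}[\vsigma]-\vsigma]}_1 .
\end{align}

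\textbf{Stationarity term.} I would first convert approximate detailed balance into approximate stationarity. The aim is $\norm{\CL_a[\vsigma]}_1\le c\,\mathsf{ADB}_a[\vsigma]^{1/2}$. To get it, rewrite each transition, decay and coherent term of \eqref{eq:exact_DB_L} using $\vA^a(\omega,t)\sqrt{\vsigma}\approx\sqrt{\vsigma}\vrho^{-1/2}\vA^a(\omega,t)\vrho^{1/2}$. Exact KMS detailed balance of $\CL_a$ with respect to $\vrho$ makes the leading terms cancel: the choice of $c(t)$ and $g(t)$ is exactly what makes $\vC^a$ cancel the antisymmetric part. The remainders are then bounded by Cauchy--Schwarz in the $\gamma(\omega)g(t)$-weighted $2$-norm. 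By Duhamel, $\norm{\CT_s[\vsigma]-\vsigma}_1\le \int_0^s\norm{\CL_{\mathsf{A}}[\vsigma]}_1 \le c|\mathsf{A}|s\,\epsilon_{\mathsf{ADB}}^{1/2}$. Averaging over $s\in[0,t]$ gives the second term of the bound.

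\textbf{Mixing term.} Set $\vX:=\CN_{\mathsf{A}}[\vsigma]-\vsigma$, which satisfies $\tr_{\mathsf{A}}\vX=0$. I would compare the ergodic average $\CR_{\mathsf{A},t}$ with the projection $\CP$ onto $\ker\CL_{\mathsf{A}}$ along the range. Since $\CL_{\mathsf{A}}$ is KMS-self-adjoint with respect to the exact Gibbs state $\vrho$, $\CP$ is a KMS-orthogonal (conditional-expectation-type) projection. Its dual $\CP^\dagger$ maps observables into the fixed-point algebra: operators commuting with every dressed jump $\hat{\vA}^a(\omega)$, $a\in P^1_{\mathsf{A}}$, in the $\vrho$-weighted sense. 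I expect such operators to be quasi-locally supported away from $\mathsf{A}$, so that $\CN_{\mathsf{A}}^\dagger\CP^\dagger[\vO]\approx\CP^\dagger[\vO]$ and hence $\CP[\vX]\approx 0$, i.e.\ $\CP[\CN_{\mathsf{A}}[\vsigma]]\approx\CP[\vsigma]$. Combined with $\CP[\vsigma]\approx\vsigma$ from the stationarity step, this leaves only $\norm{(\CR_{\mathsf{A},t}-\CP)[\vX]}_1$ to control.

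\textbf{Main obstacle: a rate $t^{-\lambda}$ with only $\e^{\mu|\mathsf{A}|}$ prefactor.} The full $\CL_{\mathsf{A}}$ acts on all $n$ qubits through the quasi-local tails of $\hat{\vA}^a(\omega)$, so there is no usable spectral gap. I would truncate the Heisenberg evolution $\e^{\ri\vH t}\vA\e^{-\ri\vH t}$, using Lieb--Robinson bounds and the Gaussian filter, to a ball of radius $\ell$ around $\mathsf{A}$, at error $\e^{-\Omega(\ell^2/\beta^2)}$ or similar. On that ball the local Lindbladian has dimension $\e^{O(|\mathsf{A}|+\ell^{D})}$. Its nonzero spectrum in the KMS norm should then have gap at least $\e^{-O(|\mathsf{A}|+\mathrm{poly}(\ell))}$, which I would obtain by a compactness or algebraic argument on the truncated finite-dimensional generator.

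With that gap, ergodic averaging gives $\norm{(\CR_{\mathsf{A},t}-\CP)[\vX]}\lesssim (t\cdot\mathrm{gap})^{-1}$ times a KMS-to-trace-norm conversion factor $\e^{O(|\mathsf{A}|)}$. Choosing $\ell\sim\log t$ balances the truncation error against the gap loss and yields $\e^{\mu|\mathsf{A}|}t^{-\lambda}$ with $\lambda,\mu$ depending polynomially on $\beta,\beta^{-1},d$. The step I expect to be hardest, and would attack first, is making this gap lower bound and the locality of the fixed-point algebra quantitative.
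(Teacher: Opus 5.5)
The paper does not reprove this theorem; it imports it from \cite{bergamaschi2025structural}. Your stationarity step matches the ingredient the paper quotes: $\norm{\CR_{\mathsf{A},t}[\vsigma]-\vsigma}_1\le t\sum_a\norm{\CL_a[\vsigma]}_1\le c|\mathsf{A}|t\,\epsilon_{\mathsf{ADB}}^{1/2}$, via \cite[Theorem C.2]{bergamaschi2025structural}.

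The genuine gap is in the mixing term. It rests on a spectral-gap lower bound for the truncated generator that is false in the form you need. The jumps act only on $\mathsf{A}$, so how fast the generator relaxes the rest of $B_\ell$ is set by the couplings of $\vH$ inside $B_\ell$. Those couplings are only bounded above. When the couplings between $\mathsf{A}$ and the rest of the ball vanish, nothing outside $\mathsf{A}$ evolves and the kernel is huge. As they tend to zero, some nonzero eigenvalues therefore tend to $0$. So no bound $\e^{-O(|\mathsf{A}|+\mathrm{poly}(\ell))}$ depending only on $\beta,d$ holds, and a compactness argument fails exactly at these decoupled limits.

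Even granting a gap $\e^{-O(|B_\ell|)}$, the rate does not follow:
\begin{itemize}
\item For degree-$d$ interaction graphs, $|B_\ell|$ can grow like $d^{\ell}$.
\item Already on a $D\ge3$ dimensional lattice, no choice of $\ell$ balances $\e^{O(\ell^D)}/t$ against a truncation error $\e^{-\Omega(\ell^2)}$ to give $t^{-\lambda}$.
\item For $\vC^a$, whose time weight $c(t)$ has only exponential tails, the truncation error is merely $\e^{-\Omega(\ell)}$.
\item The KMS-to-trace conversion costs $\norm{\vrho_{B_\ell}^{-1}}^{1/2}=\e^{O(|B_\ell|)}$, not $\e^{O(|\mathsf{A}|)}$.
\item Truncation destroys exact KMS detailed balance, so $\mathcal{P}$ is not KMS-orthogonal and a $(t\cdot\mathrm{gap})^{-1}$ estimate would need control of non-normality.
\item $\mathcal{P}[\vsigma]\approx\vsigma$ follows from small $\norm{\CL_{\mathsf{A}}[\vsigma]}_1$ only up to an inverse-gap factor.
\end{itemize}

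The missing idea, which is the route of \cite{bergamaschi2025structural} visible in the paper's proof of the strong-Markov lemma, is that nothing outside $\mathsf{A}$ has to mix and \emph{no gap is used}. Dualizing, take $\vO_t:=\CR^\dagger_{\mathsf{A},t}[\vO]$ with $\norm{\vO}\le1$. It suffices that $\vO_t$ is nearly invariant under the forgetful (Pauli-twirl) channel on $\mathsf{A}$, tested against $\vP\vsigma\vQ$ for Paulis $\vP,\vQ$ on $\mathsf{A}$. That channel absorbs every $\CN_{\mathsf{A}}$, so expanding the Kraus operators of $\CN_{\mathsf{A}}$ in Paulis then bounds $\norm{\CR_{\mathsf{A},t}[\CN_{\mathsf{A}}[\vsigma]-\vsigma]}_1$. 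By telescoping commutators, near-invariance reduces to smallness of $\labs{\tr[\vsigma\vP[\vA^a,\vO_t]\vQ]}$ for single-site Paulis $\vA^a$. This comes essentially for free from
\begin{align}
\CL_{\mathsf{A}}^{\dagger}[\vO_t]=\frac{1}{t}\big(\e^{t\CL_{\mathsf{A}}^{\dagger}}[\vO]-\vO\big),\qquad \norm{\CL_{\mathsf{A}}^{\dagger}[\vO_t]}\le\frac{2}{t}.
\end{align}
The chain from this identity is:
\begin{itemize}
\item The Dirichlet form of $\vO_t$ is $O(1/t)$ with no gap assumption.
\item By KMS detailed balance, it controls weighted commutators of $\vO_t$ with the filtered jumps $\hat{\vA}^a(\omega)$.
\item Approximate detailed balance moves the weight from $\vrho$ to $\vsigma$ at cost $O(\epsilon_{\mathsf{ADB}}^{1/2})$.
\item Inverting the operator Fourier transform converts these into commutators with the bare $\vA^a$.
\end{itemize}
This gives the bound of order $\alpha^{|\mathsf{A}|}\delta_t^{\lambda}$, with $\delta_t=t^{-1}+\epsilon_{\mathsf{ADB}}^{1/2}$, that the paper quotes. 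The exponent $\lambda<1$ and the prefactor $\e^{\mu|\mathsf{A}|}$ come from that conversion and from the Pauli expansion on $\mathsf{A}$, not from any mixing time.
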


\section{Proof of strong Markov property and converse}

We now present the proof of the strong Markov property. We state the assumptions explicitly and restate the lemma.

\begin{lem}[Clustering + approximate stationarity implies strong Markov] Consider a Hamiltonian $\vH$ on $n$ qubits with interaction degree $d$ and inverse temperature $\beta>0$. Consider a tripartition $ABC= \Lambda$. Suppose that a state $\vsigma$ satisfies (1) clustering for all operators supported on regions $A$ and $C$
    \begin{align}
        \labs{ \tr[\vsigma\vX_{A}\vY_{C}] -\tr[\vsigma\vX_{A}]\tr[\vsigma\vY_{C}] } \le \epsilon_{AC}\norm{\vX_{A}}\norm{\vY_{C}} \quad \text{for each}\quad \vX_A, \vY_C \quad \text{supported on}\quad A,C
    \end{align}
     and (2) approximate detailed balance for all single Pauli jumps on a region $AB$
\begin{align}
    \max_{a\in P^1_{AB}} \mathsf{ADB}_a[\vsigma] \le \epsilon_{ADB}.
\end{align}
    Then there is a family of approximate quasi-local recovery maps $\{\CR_{AB,t}\}_{t \ge 0}$ such that
    \begin{align}
        \lnorm{\CR_{AB,t}[\vK_A\vsigma \vK_A^{\dagger}]-\tr[\vK_A\vsigma \vK_A^{\dagger}]\cdot \vsigma}_1 \le \epsilon_{AC}+ \e^{\mu|AB|}\,\cdot \big(t^{-1}+\sqrt{\epsilon_{\mathsf{ADB}}}\big)^{\lambda}+ c\labs{AB} t \sqrt{\epsilon_{ADB}}  \quad \text{for each}\quad \vK_A, \norm{\vK_A}\le 1,
    \end{align}
    for some $\mu >0$ and $0<\lambda<1$ depending only on $\beta,d$ and an absolute constant $c$.
\end{lem}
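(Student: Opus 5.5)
Writing $\vsigma_K := \vK_A\vsigma\vK_A^\dagger$ and $p_K:=\tr[\vsigma_K]$, the plan is to compare $\CR_{AB,t}[\vsigma_K]$ with $p_K\vsigma$ by passing through an intermediate state that agrees with $\vsigma_K$ on $C$ but is \emph{normalized} like $p_K\vsigma$. The recovery map $\CR_{AB,t}$ acts only on $AB$ and is trace preserving, so it cannot change the marginal on $C$. Hence the target $p_K\vsigma$ requires $\tr_{AB}[\vsigma_K]\approx p_K\,\tr_{AB}[\vsigma]$. This is exactly what clustering provides: for any $\vY_C$ with $\norm{\vY_C}\le1$,
\begin{align}
\labs{\tr[\vsigma_K \vY_C]-p_K\tr[\vsigma\vY_C]}=\labs{\tr[\vsigma\,\vK_A^\dagger\vK_A\vY_C]-\tr[\vsigma\vK_A^\dagger\vK_A]\tr[\vsigma\vY_C]}\le\epsilon_{AC}.
\end{align}
Taking the supremum over $\vY_C$ gives $\norm{\tr_{AB}[\vsigma_K]-p_K\tr_{AB}[\vsigma]}_1\le\epsilon_{AC}$. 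This accounts for the $\epsilon_{AC}$ term, and it is the only place clustering enters.

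The remaining task is a statement about the recovery map alone: $\CR_{AB,t}$ should send any operator of the form (state on $ABC$ whose $C$-marginal is close to $\tr_{AB}[\vsigma]$, and which equals $\vsigma$ up to a local operation on $A$) back to $\vsigma$. Equivalently, I want $\CR_{AB,t}$ to act approximately like the "replacement" map $\vX\mapsto \vsigma^{1/2}_{\cdot}$, that is, $\CR_{AB,t}[\vX]\approx$ "$\vsigma$ conditioned on $\tr_{AB}\vX$" on the relevant inputs. To do this I will redo the proof of \autoref{thm:local_recovery} from \cite{bergamaschi2025structural} on region $AB$, tracking which properties of the input are actually used. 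That proof has two steps. First, approximate detailed balance shows $\vsigma$ is an approximate fixed point of every $\CL_a$, which costs $c|AB|t\sqrt{\epsilon_{ADB}}$ by Duhamel. Second, a mixing statement says the time average $\frac1t\int_0^t e^{s\CL_{AB}}$ contracts, in a $\vsigma$-weighted norm, toward the local fixed-point structure at rate $t^{-\lambda}$, with prefactor $e^{\mu|AB|}$. For a noise channel $\CN_A$ one applies this to $\vsigma-\CN_A[\vsigma]$.

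For the strong version I apply the same argument to $\vsigma_K-p_K\vsigma$. This operator is supported as a perturbation on $A$ (it equals $\vK\vsigma\vK^\dagger - p_K\vsigma$). After the clustering correction, the $C$-marginal of $\vsigma_K - p_K\vsigma$ is within $\epsilon_{AC}$ of zero. If the mixing step only needs the perturbation to be "traceless relative to the $C$-side", then the time average kills it. The reason is that $\CL_{AB}$ with approximate detailed balance has, approximately, only the conditional fixed points $\vsigma$-twisted by $C$-operators. Concretely, I would write $\vsigma_K - p_K\vsigma = (\vsigma_K - \vsigma^{1/2}\vO_C\vsigma^{1/2}) + (\text{small})$ with $\vO_C$ chosen to match the $C$-marginal. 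I would then show the first piece lies in the decaying subspace, using the spectral-gap-type bound of \cite{bergamaschi2025structural}.

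The extra $\sqrt{\epsilon_{ADB}}$ inside $(t^{-1}+\sqrt{\epsilon_{ADB}})^\lambda$ I expect comes from the mixing argument itself. With only approximate detailed balance, the reference state in the weighted norm is $\vsigma$ rather than an exact fixed point, so the decay estimate acquires an additive $\sqrt{\epsilon_{ADB}}$ error before the interpolation exponent $\lambda$ is applied. The main obstacle is this second step: justifying that the only approximate invariants of the local dynamics on $AB$ are determined by the $C$-marginal. Here I would first try to reuse the mixing lemma of \cite{bergamaschi2025structural} verbatim, with the noise output $\CN_A[\vsigma]$ replaced by the normalized non-CPTP branch $\vsigma_K/p_K$, checking that positivity and the marginal property are all that lemma uses. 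Combining the three errors by the triangle inequality then gives the bound.
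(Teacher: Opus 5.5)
Your overall decomposition matches the paper's: clustering fixes the $C$-marginal, approximate stationarity gives $\CR_{AB,t}[\vsigma]\approx\vsigma$, and the local dynamics must forget everything about the input except its $C$-marginal. Your clustering step is exactly the paper's: test the $C$-marginal of $\vK\vsigma\vK^\dagger-p_K\vsigma$ against a norm-one operator on $C$. The stationarity term $\norm{\CR_{AB,t}[\vsigma]-\vsigma}_1\le c|AB|t\sqrt{\epsilon_{ADB}}$ is also the paper's.

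\textbf{The gap.} The missing piece is the middle step, that $\CR_{AB,t}$ keeps only the $C$-marginal of inputs such as $\vK\vsigma\vK^\dagger$. This is the entire content of the lemma, and you leave it conditional (``if the mixing step only needs\dots'') with a plan that does not go through.

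Your main plan is to reuse the local-recovery argument of \cite{bergamaschi2025structural} verbatim. That argument does not rely on positivity or on the $C$-marginal of $\CN_A[\vsigma]$. It uses $\sum_k\vK_k^\dagger\vK_k=\vI$ to rewrite $\tr[\CN_A[\vsigma]\vO]-\tr[\vsigma\vO]=\sum_k\tr[\vsigma\vK_k^\dagger[\vO,\vK_k]]$, with $\vO$ the Heisenberg-evolved observable, and then bounds $\vsigma$-weighted commutators. For a single branch the same manipulation only gives $\tr[\vK\vsigma\vK^\dagger\vO]\approx\tr[\vsigma\vK^\dagger\vK\,\vO]$. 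This leaves $\vK^\dagger\vK$ and $\vO$ coupled, so clustering still has nothing to act on.

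Your fallback is not supported either:
\begin{itemize}
\item \cite{bergamaschi2025structural} provides no spectral gap or ``decaying subspace'' for the local dynamics.
\item In the split $\vK\vsigma\vK^\dagger-p_K\vsigma=(\vK\vsigma\vK^\dagger-\vsigma^{1/2}\vO_C\vsigma^{1/2})+\vsigma^{1/2}(\vO_C-p_K\vI)\vsigma^{1/2}$, the second piece is not small. A small $C$-marginal does not bound $\norm{\vsigma^{1/2}(\vO_C-p_K\vI)\vsigma^{1/2}}_1$ when $\vsigma$ correlates $AB$ with $C$, and $\vO_C$ may have large norm.
\end{itemize}

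\textbf{The missing idea.} It is a Heisenberg-picture statement. By duality it suffices to control $\tr[\vK\vsigma\vK^\dagger\,\CR^\dagger_{AB,t}[\vX]]$ uniformly over $\norm{\vX}\le1$. The paper shows $\CR^\dagger_{AB,t}[\vX]\approx\vI_{AB}\otimes\vX'_C$, where $\vX'_C:=\CN^\dagger_{AB}\circ\CR^\dagger_{AB,t}[\vX]$ satisfies $\norm{\vX'_C}\le1$ and $\CN_{AB}$ is the completely depolarizing channel on $AB$. The approximation holds when tested against $\vK\vsigma\vK^\dagger$, and also against $\vsigma$.

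The proof of this claim proceeds as follows.
\begin{itemize}
\item Write $\CN^\dagger_{AB}[\vO]-\vO=|P_{AB}|^{-1}\sum_i\big([\vV_i,\vO]\vV_i^\dagger+\vV_i[\vO,\vV_i^\dagger]\big)$, summing over Pauli strings $\vV_i$ on $AB$.
\item Telescope each $[\vV_i,\vO]$ into single-site commutators.
\item Expand $\vK=\sum_a c_a\vP_a$ with $|c_a|\le1$. Every term then has the form $\tr[\vsigma\vP[\vA,\CR^\dagger_{AB,t}[\vX]]\vQ]$, with $\vA\in P^1_{AB}$ and Pauli strings $\vP,\vQ$ on $AB$.
\item These are exactly the quantities bounded in the proof of Theorem IV.6 of \cite{bergamaschi2025structural}, by $(2c_2\delta_t^{c_1})^{\alpha_1}\alpha_2^{|AB|}$ with $\delta_t\le t^{-1}+\sqrt{\epsilon_{ADB}}$. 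Absorbing the factor $|AB|\,16^{|A|}$ yields the $\e^{\mu|AB|}(t^{-1}+\sqrt{\epsilon_{ADB}})^{\lambda}$ term.
\end{itemize}
This is where approximate detailed balance for every single-site Pauli on $AB$, not just on $A$, is used. Approximate commutation with all of them is what lets you replace the evolved observable by one acting trivially on $AB$.

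With that claim, $\tr[\vK\vsigma\vK^\dagger(\vI_{AB}\otimes\vX'_C)]=\tr[\vsigma\vK^\dagger\vK\vX'_C]$. Your clustering and stationarity steps then close the argument as you intended.
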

    Approximate detailed balance is most natural for directly dealing with $\vsigma$-weighted norms; notice that this is needed for all single Pauli operators supported on the region $AB$ (not just on $A$).

\begin{proof}
    Consider the time-averaged dynamics $\CR_{AB,t}$ associated with the Lindbladian constructed from jumps acting on $AB$. To show the main claim in trace distance, we consider any test operator $\vX$ with $\norm{\vX} =1$ and use the duality
\begin{align}
    \left\| \CR_{AB,t}[\vK \vsigma \vK^\dagger] - \vsigma \, \tr[\vK \vsigma \vK^\dagger] \right\|_1
=
\sup_{\|\vX\| \le 1}
\left|
\tr\!\left[ \vK \vsigma \vK^\dagger \, \CR_{AB,t}^\dagger[\vX] \right]
-
\tr[\vK \vsigma \vK^\dagger] \, \tr[\vsigma \vX]
\right|.
\end{align}

The main calculation is the following chain of approximations, up to (Claim)
\begin{align}
    \tr[\vK\vsigma \vK^{\dagger} \CR^{\dagger}_{AB,t}[\vX]  ]
    &\approx \tr[ \vK\vsigma \vK^{\dagger} (\vI\otimes \vX'_{C})] \tag*{(Claim)}\\
    &= \tr[ \vsigma \vK^{\dagger}\vK \vX'_{C}] \tag*{(Since $[\vK,\vX'_C]=0$)} \\
    &\approx \tr[\vsigma\vK^{\dagger}\vK] \tr[\vsigma \vX'_{C}] \tag*{(Assumption: clustering of correlation)}\\
    &\approx \tr[\vsigma\vK^{\dagger}\vK] \tr[\vsigma\CR_{AB,t}^{\dagger}[\vX]] \tag*{(Claim once again, with $\vK' = \vI$)}\\
    &\approx \tr[\vsigma\vK^{\dagger}\vK] \tr[\vsigma \vX]
\end{align}

where the last line uses~\cite[Theorem C.2]{bergamaschi2025structural} so that
\begin{align}
    \norm{\CR_{AB,t}[\vsigma]-\vsigma} \le t \sum_{a\in P^1_{AB}}\norm{\CL_a[\vsigma]}_1\le c t\labs{AB} \sqrt{\max_{a\in P^1_{AB}} \mathsf{ADB}_a[\vsigma]}  = c\labs{AB} t \sqrt{\epsilon_{ADB}}
\end{align}
for some absolute constant $c$.

It remains to show the claim that the recovery map trivializes the operator on $AB$
    \begin{align}
        \CR^{\dagger}_{AB,t}[\vX] \approx \vI_{AB}\otimes \vX_C'
    \end{align}
    in a suitable norm, and we follow an argument close to~\cite[Lemma D.4]{bergamaschi2025structural}.
It suffices to show
\begin{align}
\CR^{\dagger}_{AB,t}[\vX] \approx \undersetbrace{=:\vI_{AB}\otimes \vX_C'}{\CN_{AB}^{\dagger}\circ\CR^{\dagger}_{AB,t}[\vX]}
\end{align}
  where we define the forgetful channel
\begin{align}
    \CN_{AB}[\cdot] = \frac{\vI_{AB}}{\tr[\vI_{AB}]} \otimes \tr_{AB}[\cdot] = \frac{1}{\labs{P_{AB}}}\sum_{i\in P_{AB}} \vV_i[\cdot]\vV^{\dagger}_i.
\end{align}

Consider the difference evaluated on $\vK\vsigma\vK^{\dagger}$

\begin{align}
    {\tr\big[ \vK\vsigma \vK^{\dagger} \undersetbrace{=:\vI_{AB}\otimes \vX_C'}{\CN_{AB}^{\dagger}\circ\CR^{\dagger}_{AB,t}[\vX]} \big] - \tr[\vK\vsigma \vK^{\dagger} \CR^{\dagger}_{AB,t}[\vX]  ]}
&= \frac{1}{\labs{P_{AB}}} {\sum_i \tr\L[ \vK\vsigma \vK^{\dagger} \L( [\vV_i,\CR^{\dagger}_{AB,t}[\vX]]\vV_i^{\dagger} +  \vV_i[\CR^{\dagger}_{AB,t}[\vX],\vV_i^{\dagger}]\R)  \R]}
\end{align}
using the commutator expression
\begin{align}
\CN_{AB}^{\dagger}[\vO]-\vO =   \frac{1}{\labs{P_{AB}}} \sum_{i\in P_{AB}} [\vV_i, \vO]\vV_i^{\dagger}+\vV_i[\vO,\vV_i^{\dagger}].
\end{align}
We further brute-force rewrite the operator $\vK$ into local Pauli strings, using the following elementary expansion
\begin{align}
    \vK &= \sum_{a\in P_A} c_a\vP_a\quad \text{such that}\quad \labs{c_a} \le \norm{\vK} \le 1
\end{align}
and rewrite the $\vV_i$ as a product of individual local Paulis
\begin{align}
    \vV_i &= \vA^1\cdots \vA^{\ell} \\
    & = \undersetbrace{=:\vB^j}{\vA^1\cdots \vA^{j-1}} \vA^j \undersetbrace{=:\vC^j}{\vA^{j+1}\cdots \vA^{\ell}} \quad \text{for any}\quad j.
\end{align}

Therefore,
\begin{align}
    \tr\L[ \vK\vsigma \vK^{\dagger} [\vV_i,\CR^{\dagger}_{AB,t}[\vX]]\vV_i^{\dagger} \R] &= \sum_j \tr\L[ \vK\vsigma \vK^{\dagger} \vB^j[\vA^j,\CR^{\dagger}_{AB,t}[\vX]]\vC^j\vV_i^{\dagger} \R] \\
    &=\sum_{a,b\in P_A}\sum_j c_ac_b^*\tr\L[ \vP_a\vsigma \vP_b \vB^j[\vA^j,\CR^{\dagger}_{AB,t}[\vX]]\vC^j\vV_i^{\dagger} \R].
\end{align}
As in \cite[Proof of Theorem IV.6]{bergamaschi2025structural}, we have that for $c_2,c_1,\alpha_1$ as function of $\beta,d$,
\begin{align}
    \max_{\|\vX\|\leq 1}\max_{\vP,\vQ\in P_\mathsf{AB}\cup \vI} \max_{\vA\in P_\mathsf{AB}^1}  \labs{\tr\L[ \vsigma \vP \frac{1}{2}[\vA^a,\CR^{\dagger}_{\mathsf{AB},t}[\vX]]\vQ\R]}&\le (2c_2\delta_t^{c_1})^{\alpha_1} \alpha_2^{|\mathsf{\mathsf{AB}}|} \quad \text{for}\quad \delta_t := \frac{1}{t} + \max_{\vA\in P^1_\mathsf{A
    B}} \mathsf{ADB}_{\vA}[\vsigma]^{1/2}.
\end{align}
Summing over all possible $a,b \in P_A$ and $j$ contributes $\sum_{a\in P_A} \labs{c_a} \le 4^{\labs{A}}$; therefore, there exists an updated $\alpha'_2$ such that
\begin{align}
    \labs{\sum_{a,b\in P_A}\sum_j c_ac_b^*\tr\L[ \vP_a\vsigma \vP_b \vB^j[\vA^j,\CR^{\dagger}_{AB,t}[\vX]]\vC^j\vV_i^{\dagger} \R]} \le \labs{AB} \cdot 16^{\labs{A}} (2c_2\delta_t^{c_1})^{\alpha_1} \alpha_2^{|\mathsf{\mathsf{AB}}|} \le (2c_2\delta_t^{c_1})^{\alpha_1} (\alpha'_2)^{|\mathsf{\mathsf{AB}}|},
\end{align}
proving the advertised claim.
Collect the error terms to conclude the proof.

\end{proof}
In the above, $\CR_{AB,t}$ is stated as a quasi-local recovery map that may extend outside $AB$, which can be truncated by standard Lieb-Robinson bounds.

Next, we show the converse statement, and in fact, the recovery channel need not be detailed-balance.

\begin{lem}[Strong Markov implies clustering]
    Suppose that a state $\vsigma$ satisfies the strong local Markov property for operators on a region $A$ and an approximate recovery map $\CM$ on $AB$ such that
\begin{align}
    \lnorm{\CM[ \vK_A \vsigma \vK_A^{\dagger}] -\vsigma \cdot \tr[\vK_A\vsigma \vK_A^{\dagger}]}_1  \le \epsilon_{AB} \quad \text{for each}\quad \vK_A\quad \text{such that}\quad \norm{\vK_A}\le 1.
\end{align}
    Then for any operator $\vX_{A}$, $\norm{\vX_{A}}\le 1$ acting on $A$ and operator $\vY_{C}$, $\norm{\vY_{C}}\le 1$ acting on the complement $C= (AB)^c,$ we must have clustering
    \begin{align}
        \labs{\tr[\vsigma \vX_{A}\vY_{C}] - \tr[\vsigma \vX_{A}]\tr[\vsigma\vY_{C}]} \le 4 \epsilon_{AB}.
    \end{align}
\end{lem}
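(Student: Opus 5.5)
The idea is to test the strong Markov inequality against the observable $\vY_C$, exploiting that $\CM$ acts only on $AB$ while $\vY_C$ lives on $C$. Since $\CM$ is a channel on $AB$, its adjoint is unital and acts trivially on operators supported on $C$. Hence for every $\vK_A$ with $\norm{\vK_A}\le 1$,
\begin{align}
\tr\big[\CM[\vK_A\vsigma\vK_A^{\dagger}]\,\vY_C\big] = \tr\big[\vK_A\vsigma\vK_A^{\dagger}\,\CM^{\dagger}[\vY_C]\big] = \tr\big[\vK_A\vsigma\vK_A^{\dagger}\vY_C\big] = \tr\big[\vsigma\,\vK_A^{\dagger}\vK_A\vY_C\big],
\end{align}
where the last step uses $[\vK_A,\vY_C]=0$. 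Combining this with the strong Markov bound via Hölder ($\norm{\vY_C}\le 1$) gives
\begin{align}
\labs{\tr[\vsigma\vK_A^{\dagger}\vK_A\vY_C]-\tr[\vsigma\vK_A^{\dagger}\vK_A]\tr[\vsigma\vY_C]}\le\epsilon_{AB}.
\end{align}
So clustering already holds exactly for every operator of the form $\vX_A=\vK_A^{\dagger}\vK_A$ with $\norm{\vK_A}\le 1$. These are precisely the positive operators $0\le\vX_A\le\vI$: for such $\vX_A$, take $\vK_A=\sqrt{\vX_A}$.

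The remaining step is to extend from effects to a general $\norm{\vX_A}\le 1$, and this is where the constant $4$ comes from. The plan is as follows.
\begin{itemize}
\item Split $\vX_A=\vH_1+i\vH_2$ into Hermitian and anti-Hermitian parts, with $\vH_j$ Hermitian and $\norm{\vH_j}\le 1$.
\item Write each part as a difference of effects, $\vH_j=\vP_j-\vQ_j$, with $0\le\vP_j,\vQ_j\le\vI$ (positive and negative parts).
\item The covariance $\tr[\vsigma\vX_A\vY_C]-\tr[\vsigma\vX_A]\tr[\vsigma\vY_C]$ is linear in $\vX_A$, so applying the effect bound to each of the four pieces and the triangle inequality yields at most $4\epsilon_{AB}$.
\end{itemize}

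A few points should be checked but are not obstacles. The argument never uses the complex nature of $\vY_C$; linearity in $\vY_C$ is not needed, since the Hölder step holds for arbitrary $\vY_C$ with $\norm{\vY_C}\le 1$. The identity $\CM^{\dagger}[\vI_{AB}\otimes\vY_C]=\vI_{AB}\otimes\vY_C$ is exactly where locality of the recovery map on $AB$ enters. No detailed balance or other property of $\CM$ is needed beyond being a channel supported on $AB$.

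I expect no real obstacle; the only delicate step is this decomposition bookkeeping. A sharper constant would be possible, but $4$ matches the statement.
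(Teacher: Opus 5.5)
Your proposal is correct and follows the same route as the paper. Both arguments pair the strong Markov bound with $\vY_C$, use that $\CM$ acts only on $AB$ together with $[\vK,\vY_C]=0$ to get clustering for effects $\vK^{\dagger}\vK$, and then split $\vX_A$ into four such effects (Hermitian/anti-Hermitian parts, then positive/negative parts) to obtain the constant $4$; your version is slightly more explicit about the unitality of $\CM^{\dagger}$ and why each piece has norm at most one.
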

\begin{proof}

For any bounded operator $\vX_{A}$, $\norm{\vX_{A}}\le1$, consider the decomposition $\vX_{A} = \vK_1^{\dagger}\vK_1 - \vK_2^{\dagger}\vK_2 + i\vK_3^{\dagger}\vK_3 - i\vK_4^{\dagger}\vK_4 $ such that $\norm{\vK_i^{\dagger}\vK_i}\le \norm{\vX_{A}} =1$. Then for each $\vK_i,$
\begin{align}
    \tr[\vY_{C} \vK_i^{\dagger}\vK_i\vsigma ] = \tr[\vY_C \vK_i\vsigma \vK_i^{\dagger} ] &=  \tr[\CM_{AB} [\vY_C \vK_i\vsigma \vK_i^{\dagger} ]]\\
    & =\tr[\vY_C\CM_{AB} [\vK_i\vsigma \vK_i^{\dagger}]]\\
    & \stackrel{\epsilon_{AB}}{\approx}\tr[\vY_C \vsigma ] \cdot \tr[\vK_i\vsigma \vK_i^{\dagger}]= \tr[\vY_C \vsigma ] \cdot \tr[\vK_i^{\dagger}\vK_i\vsigma ],
\end{align}
which concludes the proof.
\end{proof}

\section{Implications of Strong Markov property}

Now that we have derived the strong Markov property from natural assumptions, we derive several consequences. Algorithmically, this implies a single-copy tomography with a provable guarantee depending on the quality of the strong Markov property. Structurally, the strong Markov property (together with the fact that the recovery is independent of the states) also implies constraints on the geometry of the set of viable local marginals.

\subsection{Single-copy tomography}
The strong Markov property is intimately related to our ability to repeatedly measure and restore the state multiple times. We will introduce a slight variant of the strong Markov property that is more natural when thinking about measurement protocols.

\begin{lem}[Strongly Markov states can  be repeatedly measured]\label{lem:repeatable}
Suppose a state $\vsigma$ satisfies the $\epsilon$-strong local Markov property for measurement channel $\CK = \sum_{i=1}^k \vK_i[\cdot]\vK_i^{\dagger}$ supported on region $A\subset \Lambda$ with recovery channel $\CN$.
Then for each outcome $i$, the empirical mean estimator $\hat{\mu}_i$ from $r$ rounds of repeated measurement--recovery satisfies
\begin{align}
         \Pr\L(\labs{\hat{\mu}_i - \tr[\vK_i^{\dagger}\vK_i\vsigma]}\ge \tau\R) \le 2\exp\L(-2r\tau^2\R) + r \epsilon
\end{align}
while returning a state $\vsigma'$ such that trace distance $\norm{\vsigma-\vsigma'} \le r \epsilon.$

Therefore, we can estimate $\tr[\vK_i^{\dagger}\vK_i\vsigma]$ to error $\tau$ using $r = \lceil\frac{\log(4\tau^2/\epsilon)}{2\tau^2}\rceil$ rounds of measurement--recovery at a failure probability $\frac{\epsilon}{2\tau^2}\log ( \frac{4e\tau^2}{\epsilon})+\epsilon $.
\end{lem}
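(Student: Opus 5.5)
We compare the actual protocol against an idealized one in which the outcomes are exactly i.i.d., and bound the discrepancy by a hybrid (telescoping) argument. The protocol runs for $r$ rounds. In each round it applies the instrument $\{\vK_j\}_{j=1}^k$, records the outcome $j_s$, and then applies the recovery channel $\CN$. The estimator is $\hat{\mu}_i = \frac{1}{r}\sum_{s=1}^r \indicator[j_s=i]$.

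We encode the whole process as a single channel with a classical register. One round is the map
\begin{align}
\Phi[\vrho] := \sum_{j=1}^k \ketbra{j}{j}\otimes \CN[\vK_j\vrho\vK_j^{\dagger}].
\end{align}
The ideal round is
\begin{align}
\Phi^{\star}[\vrho] := \sum_{j=1}^k \ketbra{j}{j}\otimes \tr[\vK_j\vrho\vK_j^{\dagger}]\,\vsigma.
\end{align}
The $\epsilon$-strong Markov property for measurements (\autoref{defn:strongMarkov_measurement}) states exactly that $\norm{\Phi[\vsigma]-\Phi^{\star}[\vsigma]}_1 = \sum_j \norm{\CN[\vK_j\vsigma\vK_j^\dagger]-\tr[\vK_j\vsigma\vK_j^\dagger]\vsigma}_1\le\epsilon$. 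The trace norm is additive over the orthogonal classical blocks. Note that $\Phi^\star[\vsigma]$ has quantum part exactly $\vsigma$.

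The key step is the telescoping bound. Write the $r$-round real process as $\Phi_r\circ\cdots\circ\Phi_1$, where each $\Phi_s$ writes to a fresh classical register and acts as the identity on earlier registers. Define hybrid $H_m$ as the process using $\Phi^\star$ in the first $m$ rounds and $\Phi$ in the remaining ones. Consecutive hybrids $H_{m}$ and $H_{m-1}$ differ only in round $m$. In both, the input to round $m$ is $(\text{classical record})\otimes\vsigma$, because the earlier rounds are ideal and the ideal round re-outputs exactly $\vsigma$. Later rounds are CPTP and cannot increase trace distance, so $\norm{H_m-H_{m-1}}_1\le \norm{\Phi[\vsigma]-\Phi^\star[\vsigma]}_1\le\epsilon$. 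Summing over $m$ gives a total trace distance of at most $r\epsilon$ between the real joint output (outcome string together with final state) and the ideal one.

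Next we analyse the ideal process. In $H_r$ the outcomes are i.i.d. with $\Pr[j_s=i]=\tr[\vK_i^\dagger\vK_i\vsigma]$, and the final state is exactly $\vsigma$. Hoeffding's inequality then gives $\Pr(|\hat{\mu}_i-\tr[\vK_i^\dagger\vK_i\vsigma]|\ge\tau)\le 2e^{-2r\tau^2}$ in the ideal world.

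To transfer this to the real process, note that any event defined on the classical outcome string has probabilities differing by at most the trace distance, at most $r\epsilon$ (half the 1-norm suffices, so the stated bound is even loose). This yields the tail bound. Tracing out the classical registers is also contractive, so the returned state $\vsigma'$ satisfies $\norm{\vsigma'-\vsigma}_1\le r\epsilon$.

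For the final claim we substitute $r=\lceil \log(4\tau^2/\epsilon)/(2\tau^2)\rceil$. Then $2e^{-2r\tau^2}\le \epsilon/(2\tau^2)$. Using $r\le \log(4\tau^2/\epsilon)/(2\tau^2)+1$, we get $r\epsilon\le \frac{\epsilon}{2\tau^2}\log(4\tau^2/\epsilon)+\epsilon$. Adding the two terms gives $\frac{\epsilon}{2\tau^2}\log(4e\tau^2/\epsilon)+\epsilon$.

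The only delicate point is the hybrid step. One must make sure the intermediate rounds are fed exactly $\vsigma$, which is why the ideal rounds are placed first, so that the strong Markov error is always evaluated at $\vsigma$ itself rather than at a perturbed state. One must also keep the classical registers so that a single trace-distance bound covers both the statistics and the returned state.
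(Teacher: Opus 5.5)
Your proposal is correct and follows the paper's proof. Your hybrids $H_m$, with the ideal rounds placed first so the strong Markov error is always evaluated at $\vsigma$ itself, are exactly the terms of the telescoping sum in \autoref{lem:TV}; your classical-register contractivity is \autoref{lem:sum_trace}; and the Hoeffding bound in the product world, the total-variation transfer and the choice of $r$ are the same as the paper's. Your final arithmetic is in fact cleaner than the paper's, whose intermediate term $\frac{\epsilon}{4\tau^2}$ should read $\frac{\epsilon}{2\tau^2}$.
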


\subsubsection{Proof of~\autoref{lem:repeatable}}

We would like to analyze the sequence of outcomes for the state after a few measurement--recovery steps. Denote the CP maps for each $i$
\begin{align}
    \vK_i [\cdot ]\vK^{\dagger}_i&=: \CK_i[\cdot]\\
    \CM\CK_{i}[\cdot]&=: \CT_{i}[\cdot].
\end{align}

When we sequentially apply measurement-recovery, the sequence of measurement outcomes is nearly a product distribution.

\begin{lem}[Effectively independent samples]
\label{lem:TV}
    Consider the distribution over outcome sequences
\begin{align}
Q(i_1,i_2,\cdots, i_r) := \tr[\CT_{i_r}\cdots \CT_{i_1} [\vsigma]]
\end{align}
and the product measure
\begin{align}
P(i_1,i_2,\cdots, i_r) :=  \tr[\CT_{i_r} [\vsigma]]\cdots \tr[\CT_{i_1} [\vsigma]].
\end{align}
Suppose that $\sum_i \lnorm{\CT_i[\vsigma]-\vsigma\tr[\CT_i[\vsigma]]}_1 \le \epsilon.$
Then $\labs{P - Q}_{TV}\le r\epsilon.$
\end{lem}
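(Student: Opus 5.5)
We will prove $\labs{P-Q}_{TV}\le r\epsilon$ by a hybrid (telescoping) argument. The idea is to replace the true state by $\vsigma$ one round at a time, each time paying the one-step error $\epsilon$. It is convenient to work with the unnormalized operators $\vrho_{i_1\cdots i_k} := \CT_{i_k}\cdots\CT_{i_1}[\vsigma]$, so that $Q(i_1,\dots,i_r)=\tr[\vrho_{i_1\cdots i_r}]$. Each $\CT_i = \CM\circ\CK_i$ is completely positive. Summing over $i$ gives $\sum_i \CT_i = \CM\circ\CK$, which is a channel, so in particular it is trace-nonincreasing on positive operators.

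For $k=0,1,\dots,r$ we define the hybrid measures
\begin{align}
H_k(i_1,\dots,i_r) := \tr\big[\CT_{i_r}\cdots\CT_{i_{k+1}}[\vsigma]\big]\cdot \prod_{j=1}^{k}\tr[\CT_{i_j}[\vsigma]].
\end{align}
In words, the first $k$ rounds are replaced by independent draws, and the remaining rounds run the actual dynamics from a fresh $\vsigma$. Then $H_0 = Q$ and $H_r = P$, so it suffices to show $\sum_{i_1,\dots,i_r}\labs{H_{k}-H_{k-1}}\le \epsilon$ for each $k$.

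Comparing $H_{k-1}$ with $H_k$, the first $k-1$ factors are common. The difference comes from the operator $\CT_{i_k}[\vsigma]$ fed into the subsequent maps, versus $\vsigma\tr[\CT_{i_k}[\vsigma]]$. To get the shared structure we write the first $k-1$ rounds of both hybrids in "refreshed" form, as products of scalars times $\vsigma$. Then
\begin{align}
\sum_{i_1..i_r}\labs{H_{k-1}-H_k} = \sum_{i_1..i_{k-1}}\prod_{j<k}\tr[\CT_{i_j}[\vsigma]]\sum_{i_k}\sum_{i_{k+1}..i_r}\labs{\tr\big[\CT_{i_r}\cdots\CT_{i_{k+1}}[\vDelta_{i_k}]\big]},
\end{align}
with $\vDelta_{i} := \CT_{i}[\vsigma]-\vsigma\tr[\CT_{i}[\vsigma]]$. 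The prefactor sums to at most $1$, because $\sum_i \tr[\CT_i[\vsigma]] = 1$.

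The key step is the bound $\sum_{i_{k+1}..i_r}\labs{\tr[\CT_{i_r}\cdots\CT_{i_{k+1}}[\vDelta]]}\le\norm{\vDelta}_1$ for any Hermitian $\vDelta$. To prove it, we form the instrument channel $\Phi[\cdot] := \sum_{\vec i}\ketbra{\vec i}{\vec i}\otimes \CT_{i_r}\cdots\CT_{i_{k+1}}[\cdot]$. This is a composition of the CPTP maps $\vrho\mapsto\sum_i \ketbra{i}{i}\otimes\CT_i[\vrho]$, each of which is trace-preserving since $\sum_i\CT_i=\CM\circ\CK$. The left-hand side is the $\ell_1$ norm of the outcome distribution of $\Phi[\vDelta]$, and hence it is at most $\norm{\Phi[\vDelta]}_1\le\norm{\vDelta}_1$ by contractivity of the trace norm under CPTP maps. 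Summing over $i_k$ and using the hypothesis $\sum_i\norm{\vDelta_i}_1\le\epsilon$, each hybrid step costs at most $\epsilon$. The triangle inequality over the $r$ steps then gives $r\epsilon$. Whether TV is normalized with a factor $1/2$ does not matter, since the factor only helps.

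The main obstacle is bookkeeping the hybrids correctly. The naive ordering, which refreshes the state after round $k$ inside the actual trajectory, yields unnormalized prefix operators rather than $\vsigma$. The ordering chosen above, with independent prefix followed by fresh dynamics, avoids this: after the prefix, the "state" is literally $\vsigma$ times a probability weight. If that indexing is awkward, the fallback is induction on $r$. One peels off the first round using $Q_r(i_1,\dots)=\sum$ of $\tr[\CT_{i_r}\cdots\CT_{i_2}[\CT_{i_1}[\vsigma]]]$, replaces $\CT_{i_1}[\vsigma]$ by $\tr[\CT_{i_1}[\vsigma]]\vsigma$ at cost $\epsilon$ by the contraction bound above, and applies the inductive hypothesis $\labs{Q_{r-1}-P_{r-1}}\le(r-1)\epsilon$, weighted by the probabilities $\tr[\CT_{i_1}[\vsigma]]$, which sum to one.
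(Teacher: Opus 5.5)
Your proposal is correct and follows essentially the same route as the paper: your hybrids $H_k$ (first $k$ rounds replaced by independent draws, then fresh dynamics from $\vsigma$) are exactly the terms of the paper's telescoping sum, and your instrument-channel contraction is the paper's Lemma~\ref{lem:sum_trace} applied to the tail $\CT_{i_r}\cdots\CT_{i_{k+1}}$. The only difference is cosmetic: the paper bounds the summed operator trace norms $\sum_{i_1,\dots,i_r}\lnorm{\CT_{i_r}\cdots\CT_{i_1}[\vsigma]-\vsigma\prod_{j}\tr[\CT_{i_j}[\vsigma]]}_1$, which also controls the post-measurement state, whereas you bound only their traces, which is all the TV statement requires.
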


\begin{proof}
We consider a telescoping sum
\begin{align}
        \sum_{i_r,\cdots ,i_1} \lnorm{\CT_{i_r}\cdots \CT_{i_1} [\vsigma]  - \vsigma \cdot \tr[\CT_{i_r}[\vsigma]] \cdots\tr[\CT_{i_1}[\vsigma]] }_1
    &\le \sum_{i_r,\cdots ,i_1} \lnorm{\CT_{i_r}\cdots \L( \CT_{i_1} [\vsigma] - \vsigma \tr[\CT_{i_1}\vsigma]\R)}_1\\
    &+ \sum_{i_r,\cdots ,i_1} \lnorm{\CT_{i_r}\cdots \L( \CT_{i_2} [\vsigma] - \vsigma \tr[\CT_{i_2}\vsigma]\R) \cdot\tr[\CT_{i_1}\vsigma] }_1\\
    &+\cdots\\
    &+ \sum_{i_r,\cdots ,i_1} \lnorm{\L( \CT_{i_r} [\vsigma] - \vsigma \tr[\CT_{i_r}\vsigma]\R) \cdots \tr[\CT_{i_1}\vsigma] }_1.
\end{align}
Pull the scalar factors of $\tr[\CT_{i_1}\vsigma]$ out of the norm, which sums to unity
\begin{align}
    (cont.) &\le \sum_{i_r,\cdots ,i_1} \lnorm{\CT_{i_r}\cdots \L( \CT_{i_1} [\vsigma] - \vsigma \tr[\CT_{i_1}\vsigma]\R)}_1\\
    &\ + \sum_{i_r,\cdots i_2} \lnorm{\CT_{i_r}\cdots \L( \CT_{i_2} [\vsigma] - \vsigma \tr[\CT_{i_2}\vsigma]\R)  }_1\\
    &\ +\cdots\\
    &\ + \sum_{i_r} \lnorm{ \CT_{i_r} [\vsigma] - \vsigma \tr[\CT_{i_r}\vsigma] }_1\\
    &\le  \sum_{i_r}\lnorm{ \CT_{i_r} [\vsigma] -  \vsigma \tr[\CT_{i_r}\vsigma] }_1 + \cdots + \sum_{i_1} \lnorm{ \CT_{i_1} [\vsigma] - \vsigma \tr[\CT_{i_1}\vsigma] }_1\tag*{(By~\autoref{lem:sum_trace} and sum over products of $\CT_{i_1}$)}\\
    &\le r \epsilon,
\end{align}
as advertised.
\end{proof}

In the above, we used an elementary inequality.

\begin{lem}\label{lem:sum_trace}
For any set of $CP$ maps $\{\CT_{i}\}$ such that $\sum_{i} \CT_i^{\dagger}[\vI] = \vI,$ we have that
\begin{align}
    \sum_{i} \lnorm{\CT_{i} [\vX]
     }_1 \le \lnorm{\vX}_1.
\end{align}
\end{lem}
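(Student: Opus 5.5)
The plan is to use the duality between the trace norm and the operator norm, and to exploit that each $\CT_i$ is completely positive, so that positive and negative parts of a Hermitian input can be handled separately. The cleanest special case is $\vX\ge 0$. Then each $\CT_i[\vX]$ is positive semidefinite, so $\lnorm{\CT_i[\vX]}_1=\tr[\CT_i[\vX]]=\tr[\vX\,\CT_i^{\dagger}[\vI]]$. Summing over $i$ and using $\sum_i\CT_i^{\dagger}[\vI]=\vI$ gives exactly $\tr[\vX]=\lnorm{\vX}_1$, so the inequality holds with equality.

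For a general Hermitian $\vX$, I would take the Jordan decomposition $\vX=\vX_+-\vX_-$ with $\vX_\pm\ge0$ and $\lnorm{\vX}_1=\tr\vX_++\tr\vX_-$. The triangle inequality then gives $\lnorm{\CT_i[\vX]}_1\le\tr\CT_i[\vX_+]+\tr\CT_i[\vX_-]$. Summing over $i$ and applying the positive case to each part yields $\tr\vX_++\tr\vX_-=\lnorm{\vX}_1$. This case already covers every use in the telescoping argument of \autoref{lem:TV}, since there the inputs are differences of Hermitian operators.

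For completeness, a fully general (non-Hermitian) $\vX$ needs a different route. I would use duality to pick unitaries or contractions $\vU_i$ with $\lnorm{\CT_i[\vX]}_1=\tr[\vU_i\CT_i[\vX]]$. This turns the sum into $\tr[\vX\,\Phi^\dagger]$ for the map $\Phi^\dagger:=\sum_i\CT_i^\dagger[\vU_i]$. It would then suffice to show $\norm{\Phi^\dagger}\le1$. To get this, I would write each $\CT_i$ in Kraus form $\CT_i[\cdot]=\sum_k \vE_{ik}\cdot\vE_{ik}^\dagger$ with $\sum_{i,k}\vE_{ik}^\dagger\vE_{ik}=\vI$. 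Then $\Phi^\dagger=\vW^\dagger(\bigoplus_{i}\vU_i\otimes\vI_k)\vW$, where $\vW$ is the isometry obtained by stacking the $\vE_{ik}$, so its norm is at most $1$.

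The only mildly delicate step is this last block-matrix/Stinespring argument for non-Hermitian inputs; everything else is positivity plus trace preservation of $\sum_i\CT_i$.
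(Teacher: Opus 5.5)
Your proof is correct, but it is organized differently from the paper's. The paper gives a one-line reduction. It bundles the family into the classical--quantum map $\CT'[\vX]=\sum_i \CT_i[\vX]\otimes |i\rangle\langle i|$, which is CPTP because each $\CT_i$ is CP and $\sum_i \CT_i^{\dagger}[\vI]=\vI$. Since the output is block diagonal, $\norm{\CT'[\vX]}_1=\sum_i\norm{\CT_i[\vX]}_1$, and the paper then invokes trace-norm contractivity of CPTP maps. That handles every $\vX$ at once but uses contractivity as a black box.

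You instead argue from scratch. The positive case (with equality) plus the Jordan decomposition settles Hermitian $\vX$, using only positivity of the $\CT_i$ and the normalization. As you correctly observe, this already covers every application in the telescoping argument of \autoref{lem:TV}. Your duality/Stinespring step for non-Hermitian $\vX$ is in effect an unpacked proof of the very contractivity the paper cites, specialized to $\CT'$. The block-diagonal contraction $\bigoplus_i \bm{U}_i$ is a dual witness for $\CT'[\vX]$, and your operator $\sum_i\CT_i^{\dagger}[\bm{U}_i]$ is exactly $(\CT')^{\dagger}[\bigoplus_i \bm{U}_i]$.

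The paper's route is shorter and conceptually cleaner: an instrument with its outcome recorded is a channel. Yours is self-contained and makes explicit that complete positivity is not needed for the Hermitian inputs that actually arise.
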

\begin{proof}
    Consider the auxiliary classical-quantum channel (CPTP maps), which includes classical registers
    \begin{align}
    \CT': \vX\rightarrow \sum_i \CT_i[\vX] \otimes \ket{i}\bra{i}.
    \end{align}
    We observe that
    \begin{align}
        \sum_i \norm{\CT_{i}[\vX]}_1= \norm{\CT'[\vX]}_1\le \norm{\vX}_1
    \end{align}
    using the 1-norm contractivity of CPTP maps, which concludes the proof.
\end{proof}

We can now prove the main guarantee for the repeated measurement-recovery sequence.
\begin{proof}[Proof of~\autoref{lem:repeatable}]
    To estimate the outcome probability $\tr[\CK_i[\vsigma]],$ consider the Boolean variable $b_{\ell}$ associated with the $\ell-$th outcome and the empirical mean
\begin{align}
    b_{\ell}:=\indicator(i_{\ell}=i), \quad \hat{\mu}:= \frac{1}{r}\sum_{\ell=1}^r b_{\ell}.
\end{align}
    We will start with the idealized case of a perfect strong Markov property, with a perfect product distribution $P$ for the consecutive outcomes. Here, the expected empirical mean is exactly the desired outcome probability
\begin{align}
    \tr[\CK_i[\vsigma]] &= \frac{1}{r}\BE_P \sum_{\ell=1}^r b_{\ell}.
\end{align}
In the ideal case, by Hoeffding's,
\begin{align}
     \Pr_P\L(\frac{1}{r}\labs{\sum_{\ell=1}^r b_{\ell} - \BE_P \sum_{\ell=1}^r b_{\ell}}\ge \tau\R) \le 2\exp\L(-2r\tau^2\R).
\end{align}
We then compare back to the original nonproduct distribution $Q$ by the total variation bound (\autoref{lem:TV}) and the fact that
\begin{align}
     \Pr_Q\L(\frac{1}{r}\labs{\sum_{\ell=1}^r b_{\ell} - \BE_P \sum_{\ell=1}^r b_{\ell}}\ge \tau\R) \le 2\exp\L(-2r\tau^2\R) + r \epsilon.
\end{align}
Finally, we balance the two terms by
$r = \lceil \frac{1}{2\tau^2}\log ( \frac{4\tau^2}{\epsilon})\rceil$
\begin{align}
\delta \le \frac{\epsilon}{2\tau^2} +\epsilon \lceil \frac{1}{2\tau^2}\log ( \frac{4\tau^2}{\epsilon})\rceil \le \frac{\epsilon}{4\tau^2} + \frac{\epsilon}{2\tau^2}\log ( \frac{4\tau^2}{\epsilon})+\epsilon = \frac{\epsilon}{2\tau^2}\log ( \frac{4e\tau^2}{\epsilon})+\epsilon
\end{align}
to obtain the advertised failure probability.

\end{proof}

\subsection{Locally very close or locally very far}
The measurement--recovery property of strongly Markov states allows one to obtain the entire marginal with very high precision, thereby instantiating a single-shot statistical distinguisher between the two.

\begin{lem}[Locally close but distinguishable]
\label{lem:local_close_full}
Consider any two $\epsilon$-strongly Markov states $\vsigma_1$ and $\vsigma_2$ under the same recovery map $\CR$. Suppose that the marginal on $A$ is $\delta$ far in trace distance $\norm{\tr_{BC}[\vsigma_1]-\tr_{BC}[\vsigma_2]}_1 \ge \delta$.
Then there is a measurement-recovery procedure that distinguishes the case $\vsigma_1$ from $\vsigma_2$ with failure probability $\frac{16\epsilon}{\delta^2}\log ( \frac{e\delta^2}{8\epsilon})+2\epsilon.$
\end{lem}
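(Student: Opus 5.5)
The plan is to reduce to a single binary measurement on $A$ that separates the two marginals. Then we apply the repeated measurement--recovery estimator of \autoref{lem:repeatable} with accuracy $\tau=\delta/4$ and threshold the empirical mean.

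First, the Helstrom (Holevo--Helstrom) characterization gives a projector $\vP_A$ on $A$ such that
\begin{align}
\tr[\vP_A(\vsigma_1-\vsigma_2)] = \tfrac12\normp{\tr_{BC}[\vsigma_1]-\tr_{BC}[\vsigma_2]}{1}\ge \tfrac{\delta}{2}.
\end{align}
Take the two-outcome measurement with Kraus operators $\vK_1=\vP_A$ and $\vK_2=\vI-\vP_A$. These are supported on $A$, have norm at most $1$, and satisfy $\sum_i\vK_i^\dagger\vK_i=\vI$. For both states the $\epsilon$-strong Markov property for measurements (\autoref{defn:strongMarkov_measurement}) holds with the same recovery map $\CR$, so the hypothesis $\sum_i\norm{\CT_i[\vsigma_j]-\vsigma_j\tr[\CT_i[\vsigma_j]]}_1\le\epsilon$ of \autoref{lem:TV} holds for $j=1,2$. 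If the definition used is the single-Kraus version, the two-outcome case costs only a factor $2$; I would absorb this into $\epsilon$ or track it as a constant.

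Write $p_j=\tr[\vP_A\vsigma_j]$, so $|p_1-p_2|\ge\delta/2$. The protocol runs $r$ rounds of measure-then-recover on the unknown state and records the empirical frequency $\hat\mu$ of outcome $1$. It outputs ``$\vsigma_1$'' if $|\hat\mu-p_1|<\delta/4$ and ``$\vsigma_2$'' otherwise; both $p_1$ and $p_2$ are known from the classical description of the hypotheses. Failure under $\vsigma_1$ requires $|\hat\mu-p_1|\ge\delta/4$. Failure under $\vsigma_2$ requires $|\hat\mu-p_1|<\delta/4$, which forces $|\hat\mu-p_2|>\delta/4$. In either case the estimator misses by at least $\tau=\delta/4$, so \autoref{lem:repeatable} bounds each failure probability by $2\e^{-2r\tau^2}+r\epsilon$.

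Finally, substitute $\tau=\delta/4$, so $2\tau^2=\delta^2/8$ and $4\tau^2=\delta^2/4$. Choosing $r=\lceil\frac{8}{\delta^2}\log(\frac{\delta^2}{4\epsilon})\rceil$ as in the last part of \autoref{lem:repeatable} gives failure probability
\begin{align}
\frac{8\epsilon}{\delta^2}\log\Big(\frac{e\delta^2}{4\epsilon}\Big)+\epsilon .
\end{align}
The stated bound $\frac{16\epsilon}{\delta^2}\log(\frac{e\delta^2}{8\epsilon})+2\epsilon$ is this expression with $\epsilon$ replaced by $2\epsilon$: $\frac{8\cdot 2\epsilon}{\delta^2}\log(\frac{e\delta^2}{8\epsilon})+2\epsilon$. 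So I would derive the bound carrying an effective parameter $2\epsilon$. This accounts for the factor-$2$ loss when passing from the single-Kraus strong Markov property to the two-outcome measurement version. Alternatively, it is consistent with bounding the total error as the sum of the two hypotheses' failure contributions.

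The main bookkeeping obstacle is matching these constants precisely, and in particular checking which notion of $\epsilon$-strong Markov is assumed so that the factor $2$ is applied correctly. The probabilistic content is already contained in \autoref{lem:repeatable} and \autoref{lem:TV}.
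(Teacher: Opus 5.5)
Your proposal is correct and is essentially the paper's proof: the paper also takes the Helstrom projector $\vP_A$ with $\tr[\vP_A(\vsigma_1-\vsigma_2)]\ge\delta/2$ and uses the two-outcome measurement $\{\vP_A,\vI-\vP_A\}$. It then notes that Kraus rank two turns $\epsilon$-strong Markov in the single-Kraus sense of \autoref{defn:strongMarkov} into $2\epsilon$-strong Markov for measurements, and applies \autoref{lem:repeatable} at $\tau=\delta/4$ with $\epsilon'=2\epsilon$. So your first explanation of the factor $2$ is the right one; drop the ``alternative'' of summing the two hypotheses' failure probabilities, since that gives $\frac{16\epsilon}{\delta^2}\log(\frac{e\delta^2}{4\epsilon})+2\epsilon$, not the stated bound.
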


\begin{proof}
The difference in 1-norm implies the existence of an optimal distinguishing operator $\vO_A = \vP_{A}-\vQ_A$ as the difference of two disjoint projectors such that $\tr[\vP_A\vsigma_1]-\tr[\vP_A\vsigma_2] =\delta/2$. Consider the measurement channel $\CK[\cdot]= \vP_{A}\cdot \vP_{A} + \vQ_{A}\cdot \vQ_{A}$; since the Kraus rank is two, $\vsigma_1$ and $\vsigma_2$ are both $\epsilon'$-strongly Markov for the measurement channel $\CK$, for $\epsilon' = 2\epsilon.$ By~\autoref{lem:repeatable} at $\tau = \delta/4$, we can measure $\vP_A$ to precision $\delta/4$ using the measurement--recovery protocol, thereby distinguishing $\vsigma_1$ and $\vsigma_2$ under the guaranteed failure probability.
\end{proof}

\begin{cor}
    In the setting of~\autoref{lem:local_close_full}, there exists a quasi-neighborhood such that the two marginals are nearly distinguishable.
\end{cor}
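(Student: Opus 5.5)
The corollary is vague: ``there exists a quasi-neighborhood such that the two marginals are nearly distinguishable.'' I read it as follows. If the $A$-marginals differ by $\delta$, then the marginals on a quasi-local neighborhood $\tilde{A}\supseteq AB$ (where $\CR$ is approximately supported) must differ by nearly $2$ in trace distance. The plan is to convert the single-shot distinguisher of \autoref{lem:local_close_full} into a bound on the trace distance of marginals, using the locality of the whole protocol.

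\textbf{Step 1: write the distinguisher as a two-outcome POVM.} The procedure of \autoref{lem:local_close_full} consists of $r=\lceil \frac{8}{\delta^2}\log(\delta^2/(8\epsilon)) \rceil$ rounds of the measurement $\CK$ (on $A$) interleaved with the recovery $\CR$, followed by a classical threshold test on the empirical mean $\hat\mu$. Using the Heisenberg picture, I will write it as a POVM $\{\vO,\vI-\vO\}$ with
\begin{align}
\vO = \sum_{(i_1,\dots,i_r)\in S} \CT_{i_1}^{\dagger}\cdots\CT_{i_r}^{\dagger}[\vI],
\end{align}
where $S$ is the accepting set of outcome strings. Since the test fails with probability at most $p_{\rm fail} := \frac{16\epsilon}{\delta^2}\log(\frac{e\delta^2}{8\epsilon})+2\epsilon$ on each hypothesis, we get $\tr[\vO\vsigma_1]-\tr[\vO\vsigma_2]\ge 1-2p_{\rm fail}$.

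\textbf{Step 2: use locality.} If $\CR$ is exactly supported on $AB$, each $\CT_i^\dagger$ maps operators on $AB$ to operators on $AB$, so $\vO$ is supported on $AB$. Then
\begin{align}
\tfrac12\norm{\vsigma_1^{(AB)}-\vsigma_2^{(AB)}}_1 \ge \tr[\vO(\vsigma_1-\vsigma_2)] \ge 1-2p_{\rm fail}.
\end{align}
This gives the claim: a $\delta$ gap on $A$ is amplified to a near-maximal gap on $AB$ whenever $\epsilon\ll\delta^2$.

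\textbf{Step 3: handle a quasi-local recovery map.} When $\CR=\CR_{AB,t}$ is only quasi-local, I will truncate it with a Lieb--Robinson bound to a channel $\tilde{\CR}$ exactly supported on an enlarged region $\tilde{A}$, with $\norm{\CR-\tilde{\CR}}_\diamond\le\eta$. Replacing every occurrence of $\CR$ by $\tilde{\CR}$ across the $r$ rounds changes each acceptance probability by at most $r\eta$, by the triangle inequality and contractivity in the diamond norm. This yields
\begin{align}
\tfrac12\norm{\vsigma_1^{(\tilde A)}-\vsigma_2^{(\tilde A)}}_1\ge 1-2p_{\rm fail}-2r\eta.
\end{align}

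The main obstacle is this last step: $r$ grows like $\delta^{-2}\log(1/\epsilon)$, so the truncation radius must be chosen large enough that $r\eta$ remains small. That in turn fixes how large the quasi-neighborhood $\tilde{A}$ has to be. I would take the radius to be logarithmic in $r/\eta$, using the exponential tails of the Lieb--Robinson bound.
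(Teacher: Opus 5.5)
Your proposal is correct and is essentially the paper's own argument, which the paper gives only in prose: the measurement--recovery distinguisher of \autoref{lem:local_close_full} is itself a quasi-local POVM, so its near-certain success forces the marginals on the region where the recovery acts to be nearly perfectly distinguishable. You make this precise with the Heisenberg-picture element $\vO$ and the bound $\tr[\vO(\vsigma_1-\vsigma_2)]\le\frac12\norm{\vsigma_1^{(AB)}-\vsigma_2^{(AB)}}_1$. Two small remarks: under \autoref{defn:strongMarkov} the recovery channel is already exactly supported on $AB$, so your Step 2 alone proves the corollary. Step 3 is needed only when instantiating with the quasi-local $\CR_{AB,t}$, and there the per-use truncation error $\eta$ also grows with the recovery time $t$, so the truncation radius must account for $t$ as well as $r$.
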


\subsection{Strongly Markov states are locally extremal}
The set of strongly Markov states has local marginals that exhibit extremal properties. Here, the locality of the recovery map is less crucial except that the \textit{same} recovery map must apply for both states $\vsigma_1$ and $\vsigma_2.$

\begin{lem}[Locally extremal]
    Consider a region $A$ and a recovery map $\CM$ for which an $\epsilon$-strongly Markov state $\vsigma$ is a mixture over two $\epsilon$-strongly Markov states.
    \begin{align}
    \vsigma = p_1\vsigma_1 + p_2 \vsigma_2.
\end{align}
    Then the local marginals are close
\begin{align}
\normp{\vsigma^{A}_1-\vsigma_2^{A}}{1} \le 2\sqrt{\frac{2\epsilon}{p_1p_2}}.
\end{align}

\end{lem}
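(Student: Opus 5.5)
The plan is to exploit linearity of the fixed recovery map $\CM$ across the convex decomposition $\vsigma = p_1\vsigma_1+p_2\vsigma_2$, tested on a single well-chosen Kraus operator. Let $D:=\normp{\vsigma_1^{A}-\vsigma_2^{A}}{1}$. As in the proof of~\autoref{lem:local_close_full}, write the optimal distinguishing observable on $A$ as $\vP_A-\vQ_A$ with orthogonal projectors $\vP_A,\vQ_A$. Since $\tr[\vsigma_1-\vsigma_2]=0$, this gives
\begin{align}
q_1-q_2 = D/2, \qquad q_j := \tr[\vP_A\vsigma_j].
\end{align}
Take $\vK=\vP_A$, which has $\norm{\vK}\le 1$ and is supported on $A$, so~\autoref{defn:strongMarkov} applies to all three states with the same $\CM$.

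First, apply the strong Markov property to $\vsigma_1$ and $\vsigma_2$ separately, and use linearity:
\begin{align}
\CM[\vK\vsigma\vK^\dagger] = p_1\CM[\vK\vsigma_1\vK^\dagger]+p_2\CM[\vK\vsigma_2\vK^\dagger].
\end{align}
By the triangle inequality this is within $(p_1+p_2)\epsilon=\epsilon$ of $p_1q_1\vsigma_1+p_2q_2\vsigma_2$.

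Second, apply the strong Markov property to $\vsigma$ itself. This says the same operator is within $\epsilon$ of $q\vsigma$, where $q:=\tr[\vK\vsigma\vK^\dagger]=p_1q_1+p_2q_2$. Combining the two estimates gives
\begin{align}
\lnorm{p_1(q_1-q)\vsigma_1+p_2(q_2-q)\vsigma_2}_1\le 2\epsilon.
\end{align}
Since $q_1-q=p_2(q_1-q_2)$ and $q_2-q=-p_1(q_1-q_2)$, the left-hand side equals $p_1p_2\labs{q_1-q_2}\,\normp{\vsigma_1-\vsigma_2}{1}$.

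Finally, lower bound both factors by $D$. Monotonicity of the trace norm under the partial trace $\tr_{BC}$ gives $\normp{\vsigma_1-\vsigma_2}{1}\ge D$, and $\labs{q_1-q_2}=D/2$ by construction. Hence
\begin{align}
p_1p_2\,\frac{D^2}{2}\le 2\epsilon, \qquad\text{so}\qquad D\le 2\sqrt{\epsilon/(p_1p_2)}.
\end{align}
This is even slightly stronger than the claimed $2\sqrt{2\epsilon/(p_1p_2)}$.

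The only step needing care is the choice of $\vK$ so that the scalar coefficient $\labs{q_1-q_2}$ is comparable to $D$ rather than vanishing. The projector onto the positive part of $\vsigma_1^A-\vsigma_2^A$ handles this. If one instead used a generic $\vK$ or a two-outcome measurement with~\autoref{defn:strongMarkov_measurement}, the constant would change, which likely explains the extra factor $\sqrt{2}$ in the statement.
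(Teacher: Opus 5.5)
Your proof is correct and follows the paper's argument: linearity of the common map $\CM$ over the mixture, the strong Markov bound applied to $\vsigma_1$, $\vsigma_2$ and $\vsigma$ with the same $\vK$, and the rearrangement to $\lnorm{p_1p_2(q_1-q_2)(\vsigma_1-\vsigma_2)}_1\le 2\epsilon$. The two proofs differ only in the last step. The paper tests this operator against $\vK^\dagger\vK$ to get $p_1p_2(q_1-q_2)^2\le 2\epsilon$ and then takes the supremum over $\vK$, whereas you fix the positive-part projector and use $\normp{\vsigma_1-\vsigma_2}{1}\ge D$; that choice is exactly where your $\sqrt{2}$ improvement comes from, not a Kraus-versus-measurement formulation as you guessed.
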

 The contrapositive states that, if the marginals are sufficiently different, then the mixture would be far from strongly Markov, as a form of extremality. Since approximate stationarity is preserved under mixture, there must be a nontrivial correlation introduced by taking a mixture, which breaks the strong Markov property.

\begin{proof}
Denote
\begin{align}
     \norm{\vX-\vY}_1 \le \epsilon \quad \text{as}\quad  \vX \stackrel{\epsilon}{\approx} \vY.
\end{align}
    By assumption, for any $\vK$, $\norm{\vK}\le1,$
\begin{align}
\CR[\vK\vsigma_1\vK^{\dagger}]&\stackrel{\epsilon}{\approx} \vsigma_1\cdot \undersetbrace{=:q_1}{\tr[\vK \vsigma_1\vK^{\dagger}]}\\
    \CR[\vK\vsigma_2\vK^{\dagger}]&\stackrel{\epsilon}{\approx} \vsigma_2\cdot \undersetbrace{=:q_2}{\tr[\vK \vsigma_2\vK^{\dagger}]}.
\end{align}
Moreover,
\begin{align}
 \CR[\vK (p_1\vsigma_1 + p_2\vsigma_2)\vK^{\dagger}]&\stackrel{\epsilon}{\approx} \undersetbrace{=p_1q_1+p_2q_2}{\tr[\vK (p_1\vsigma_1 + p_2\vsigma_2)\vK^{\dagger}]} (p_1\vsigma_1+p_2\vsigma_2)\\
 \implies p_1q_1\vsigma_1 + p_2 q_2\vsigma_2 &\stackrel{\epsilon}{\approx} (p_1q_1+p_2q_2)(p_1\vsigma_1+p_2\vsigma_2).
\end{align}

Rearrange to obtain
\begin{align}
    p_1 p_2(q_1 - q_2) (\vsigma_1 - \vsigma_2) = p_1 (q_1 - p_1q_1 - p_2q_2)\vsigma_1 + p_2(q_2 - p_1q_1-q_2p_2) \vsigma_2 \stackrel{2\epsilon}{\approx} 0.
\end{align}
Taking traces with $\vK^{\dagger}\vK$ and using $\tr[\vK^{\dagger}\vK\vsigma_1]=q_1$ and $\tr[\vK^{\dagger}\vK\vsigma_2]=q_2$, we obtain
\begin{align}
    p_1 p_2(q_1 - q_2)^2 \le 2\epsilon.
\end{align}

By the variational characterization of the Schatten 1-norm,
\begin{align}
\norm{\vsigma_1^A-\vsigma_2^A}_1
=
2 \sup_{0 \le \vK^{\dagger}\vK \le I}
\labs{\tr[\vK^{\dagger}\vK(\vsigma_1^{A}-\vsigma_2^{A})]}
\le
2\sqrt{\frac{2\epsilon}{p_1 p_2}},
\end{align}
which concludes the proof.
\end{proof}

\bibliographystyle{alphaUrlePrint.bst}
\bibliography{ref}

\begin{thebibliography}{CKBG25}

\bibitem[BCV25]{bergamaschi2025structural}
Thiago Bergamaschi, Chi-Fang Chen, and Umesh Vazirani.
\newblock A structural theory of quantum metastability: Markov properties and area laws.
\newblock {\em arXiv preprint arXiv:2510.08538}, 2025.

\bibitem[BLMT25]{bakshi2025dobrushin}
Ainesh Bakshi, Allen Liu, Ankur Moitra, and Ewin Tang.
\newblock A dobrushin condition for quantum markov chains: Rapid mixing and conditional mutual information at high temperature.
\newblock {\em arXiv preprint arXiv:2510.08542}, 2025.

\bibitem[CG26]{chen2026efficient}
Chi-Fang Chen and Andr{\'a}s Gily{\'e}n.
\newblock Efficient shadow tomography of thermal states.
\newblock {\em arXiv preprint arXiv:2603.16845}, 2026.

\bibitem[CK25]{chen2025catalytic}
Chi-Fang Chen and Robbie King.
\newblock Catalytic tomography of ground states.
\newblock {\em arXiv preprint arXiv:2512.10247}, 2025.

\bibitem[CKBG23]{chen2023quantum}
Chi-Fang Chen, Michael~J Kastoryano, Fernando~GSL Brand{\~a}o, and Andr{\'a}s Gily{\'e}n.
\newblock Quantum thermal state preparation.
\newblock {\em arXiv preprint arXiv:2303.18224}, 2023.

\bibitem[CKBG25]{chen2025efficient}
Chi-Fang Chen, Michael Kastoryano, Fernando~GSL Brand{\~a}o, and Andr{\'a}s Gily{\'e}n.
\newblock Efficient quantum thermal simulation.
\newblock {\em Nature}, 646(8085):561--566, 2025.

\bibitem[CKG23]{chen2023efficient}
Chi-Fang Chen, Michael~J Kastoryano, and Andr{\'a}s Gily{\'e}n.
\newblock An efficient and exact noncommutative quantum gibbs sampler.
\newblock {\em arXiv preprint arXiv:2311.09207}, 2023.

\bibitem[CR25]{chen2025GibbsMarkov}
Chi-Fang Chen and Cambyse Rouz{\'e}.
\newblock Quantum gibbs states are locally markovian.
\newblock {\em arXiv preprint arXiv:2504.02208}, 2025.

\bibitem[FR15]{fawzi2015quantum}
Omar Fawzi and Renato Renner.
\newblock Quantum conditional mutual information and approximate markov chains.
\newblock {\em Communications in Mathematical Physics}, 340(2):575--611, 2015.

\bibitem[KK25]{kato2025clustering}
Kohtaro Kato and Tomotaka Kuwahara.
\newblock Clustering of conditional mutual information via quantum belief-propagation channels.
\newblock {\em arXiv preprint arXiv:2504.02235}, 2025.

\bibitem[Kuw24]{kuwahara2024clustering}
Tomotaka Kuwahara.
\newblock Clustering of conditional mutual information and quantum markov structure at arbitrary temperatures.
\newblock {\em arXiv preprint arXiv:2407.05835}, 2024.

\bibitem[SA25]{SA24}
Matteo Scandi and {\'A}lvaro~M. Alhambra.
\newblock Thermalization in open many-body systems and kms detailed balance, 2025, arXiv: {{\ttfamily 2505.20064}}.

\end{thebibliography}

\end{document}